\begin{document}
\newtheorem{theorem}{Theorem}[section]
\newtheorem{cor}[theorem]{Corollary}
\newtheorem{lemma}[theorem]{Lemma}
\newtheorem{fact}[theorem]{Fact}
\newtheorem{claim}[theorem]{Claim}
\newtheorem{definition}[theorem]{Definition}
\theoremstyle{definition}
\newtheorem{example}[theorem]{Example}
\newtheorem{remark}[theorem]{Remark}
\newcommand\eps{\varepsilon}
\newcommand{\pp}{\mathcal P}
\newcommand{\E}{\mathbb E}
\newcommand{\Var}{{\rm Var}}
\newcommand{\Prob}{\mathbb{P}}
\newcommand{\w}{\mathbf{w}}
\newcommand{\vol}{{\rm vol}}
\newcommand{\N}{{\mathbb N}}
\newcommand{\R}{{\mathbb R}}
\newcommand{\wep}{\emph{wep}}
\newcommand{\eqn}[1]{(\ref{#1})}

\newcommand\qedc{\hfill $\Diamond$ \smallskip}
\renewcommand{\labelenumi}{(\roman{enumi})}

\def\bibitemx#1{\bibitem{#1}\marginpar{$\otimes$}}
\def\co{~\marginpar{\fbox{,}}}
\def\wck#1 {\underline{#1}~\marginpar{\fbox{#1} {\tiny ?}}}
\def\silent#1\par{\par}
\def\text#1{\quad\mbox{#1}\quad}

\def\quest#1{\footnote{$\rightarrow$ {\sf #1}}}

\makeatletter
\renewcommand{\@seccntformat}[1]{\@nameuse{the#1}.\quad}
\makeatother

\title{Geometric protean graphs}

\author{Anthony Bonato}
\address{Department of Mathematics, Ryerson University, Toronto, ON, Canada M5B 2K3}
\email{\tt abonato@ryerson.ca}
\author{Jeannette Janssen}
\address{Department of Mathematics and Statistics, Dalhousie University, Halifax, NS, Canada B3H 3J5}
\email{\tt janssen@mathstat.dal.ca}
\author{Pawe{\l} Pra{\l}at}
\address{Department of Mathematics, West Virginia University, Morgantown, WV 26506-6310, USA}
\email{\tt pralat@math.wvu.edu}

\keywords {random graphs, web graphs, protean graphs, degree distribution, differential equations method, power law graphs, scale-free networks}
\subjclass {Primary: 05C80. 
Secondary: 05C07}
\thanks{The authors gratefully acknowledge support from NSERC and MITACS. The present article is the full version of an article which appeared in the Proceedings of the 7th Workshop on Algorithms and Models for the Web-Graph (WAW2010)~\cite{waw2010}}

\maketitle

\begin{abstract}
We study the link structure of on-line social networks (OSNs), and introduce a new model for such networks which may help infer their hidden underlying reality. In the geo-protean (GEO-P) model for OSNs nodes are identified with points in Euclidean space, and edges are stochastically generated by a mixture of the relative distance of nodes and a ranking function. With high probability, the GEO-P model generates graphs satisfying many observed properties of OSNs, such as power law degree distributions, the small world property, densification power law, and bad spectral expansion. We introduce the dimension of an OSN based on our model, and examine this new parameter using actual OSN data. We discuss how the geo-protean model may eventually be used as a tool to group users with similar attributes using only the link structure of the network.
\end{abstract}

\section{Introduction}

On-line social networking sites such as Facebook, Flickr, LinkedIn, MySpace, and Twitter are examples of large-scale, complex, real-world networks, with an estimated total number of users that equals at least half of all Internet users~\cite{ahn}. We may model an OSN by a graph with nodes representing users and edges corresponding to friendship links. While OSNs gain increasing popularity among the general public, there is a parallel increase in interest in the cataloguing and modelling of their structure, function, and evolution. OSNs supply a vast and historically unprecedented record of large-scale human social interactions over time.

The availability of large-scale social network data has led to numerous studies that revealed emergent topological properties of OSNs. For example, the recent study~\cite{kwak} crawled the entire Twitter site, and studied properties found among the $41.7$ million user profiles and $1.47$ billion social relations.  The next challenge is the design and rigorous analysis of models simulating these properties. Graph models were successful in simulating properties of other complex networks such as the web graph (see the books~\cite{AB,CL} for surveys of such models), and it is thus natural to propose models for OSNs. Few rigorous models for OSNs have been posed and analyzed, and there is no universal consensus of which properties such models should simulate. Notable recent models are those of Kumar et al.~\cite{kumar2006sae}, Lattanzi and Sivakumar~\cite{ls}, and the Iterated Local Transitivity model~\cite{anppc}.

Researchers are now in the enviable position of observing how OSNs evolve over time, and as such, network analysis and models of OSNs typically incorporate time as a parameter. While by no means exhaustive, some of the main observed properties of OSNs include the following.

(i) \emph{Large-scale.} OSNs are examples of complex networks with number nodes (which we write as $n$) often in the millions; further, some users have disproportionately high degrees. For example, some of the nodes of Twitter corresponding to well-known celebrities  have degree over five million.

(ii) \emph{Small world property and shrinking distances.} The small world property, introduced by Watts and Strogatz~\cite{sw}, is a central notion in the study of complex networks (see also~\cite{klein1}). The small world property demands a low diameter of $O(\log n)$, and a higher clustering coefficient than found in a binomial random graph with the same number of nodes and same average degree. Adamic et al.~\cite{ada} provided an early study of an OSN at Stanford University, and found that the network has the small world property. Similar results were found in~\cite{ahn} which studied Cyworld, MySpace, and Orkut, and in~\cite{mislove2007maa} which examined data collected from Flickr, YouTube, LiveJournal, and Orkut. Low diameter (of $6$) and high clustering coefficient were reported in the Twitter by both Java et al.~\cite{twit0} and Kwak et al.~\cite{kwak}. Kumar et al.~\cite{kumar2006sae} reported that in Flickr and Yahoo!360 the diameter actually decreases over time. Similar results were reported for Cyworld in~\cite{ahn}. Well-known models for complex networks such as preferential attachment or copying models have logarithmically growing diameters with time. Various models (see~\cite{les1,les2}) were proposed simulating power law degree distributions and decreasing distances.

(iii) \emph{Densification power law.}  A graph $G$ with $e_t$ edges and $n_t$ nodes satisfies a \emph{densification power law} if there is a constant $a$ in $(1,2)$ such that $e_t$ is proportional to $n_t^a$. In particular, the average degree grows to infinity with the order of the network. In \cite{les1}, densification power laws were reported in several real-world networks such as the physics citation
graph and the internet graph at the level of autonomous systems. Densification was reported in Cyworld \cite{ahn} and has been detected in other OSNs.

(iv) \emph{Power law degree distributions.} In a graph $G$ of order $n,$ let $N_k=N_k(n)$ be the number of nodes of degree $k.$ The degree distribution of $G$ follows a \emph{power law} if $N_k$ is proportional to $k^{-b },$ for a fixed exponent $b>2.$  Power laws were observed over a decade ago in subgraphs sampled from the web graph, and are ubiquitous properties of complex networks (see Chapter~2 of \cite{AB}).  Kumar, Novak, and Tomkins~\cite{kumar2006sae} studied the evolution of Flickr and Yahoo!360, and found that these networks exhibit power-law degree distributions. Power law degree distributions for both the in- and out-degree distributions were documented in Flickr, YouTube, LiveJournal, and Orkut~\cite{mislove2007maa}, as well as in Twitter~\cite{twit0,kwak}.

(vi) \emph{Bad spectral expansion.} Social networks often organize into separate clusters in which the intra-cluster links are significantly higher than the number of inter-cluster links. In particular, social networks contain communities (characteristic of social organization), where tightly knit groups correspond to the clusters~\cite{gn}. As a result, it is reported in~\cite{estrada} that social networks, unlike other complex networks, possess bad spectral expansion properties realized by small gaps between the first and second eigenvalues of their adjacency matrices.

Our main contributions in the present work are twofold: to provide a model---the geo-protean (GEO-P) model---which provably satisfies all six properties above (see Section~\ref{results}; note that, while the model does not generate graphs with shrinking distances, the parameters can be adjusted to give constant diameter), and second, to suggest a reverse engineering approach to OSNs. Given only the link structure of OSNs, we ask whether it is possible to infer the hidden reality of such networks. Can we group users with similar attributes from only the link structure? For instance, a reasonable assumption is that out of the millions of users on a typical OSN, if we could assign the users various attributes such as age, sex, religion, geography, and so on, then we should be able to identify individuals or at least small sets of users by their set of attributes. Thus, if we can infer a set of identifying attributes for each node from the link structure, then we can use this in formation to recognize communities and understand connections between users.

Characterizing  users by a set of attributes leads naturally to a vector-based or geometric approach to OSNs. In geometric graph models, nodes are identified with points in a metric space, and edges are introduced by probabilistic rules that depend on the proximity of the nodes in the space. We envision OSNs as embedded in a \emph{social space}, whose dimensions quantify user traits such as interests or geography; for instance, nodes representing users from the same city or in the same profession would likely be closer in social space. A first step in this direction was given in~\cite{ln}, which introduced a rank-based model in an $m$-dimensional grid for social networks (see also the notion of \emph{social distance} provided in~\cite{watts1}). Such an approach was taken in geometric preferential attachment models of Flaxman et al.~\cite{flax1}, and in the SPA geometric model for the web graph~\cite{abcjp}.

The geo-protean model incorporates a geometric view of OSNs, and also exploits ranking to determine the link structure. Higher ranked nodes are more likely to receive links. A formal description of the model is given in Section~\ref{model}. Results on the model are summarized in Section~\ref{results}. We present a novel approach to OSNs by assigning them a dimension; see the formula~(\ref{mmm}). Given certain OSN statistics (order, power law exponent, average degree, and diameter), we can assign each OSN a dimension based on our model. The dimension of an OSN may be roughly defined as the least integer $m$ such that we can accurately embed the OSN in $m$-dimensional Euclidean space. Full proofs of our results are presented in Section~\ref{proofs}. In the final discussion, we summarize our findings and conjecture on the correct diameter for OSNs.

\section{The GEO-P Model for OSNs}\label{model}

We now present our model for OSNs, which is based on both the notions of embedding the nodes in a metric space (geometric), and a link probability based on a ranking of the nodes (protean). We identify the users of an OSN with points in $m$-dimensional Euclidean space.  Each node has a region of influence, and nodes may be joined with a certain probability if they land within each others region of influence. Nodes are ranked by their popularity from $1$ to $n$, where $n$ is the number of nodes, and $1$ is the highest ranked node. Nodes that are ranked higher  have larger regions of influence, and so are more likely to acquire links over time. For simplicity, we consider only undirected graphs. The number of nodes $n$ is fixed but the model is dynamic: at each time-step, a node is born and one dies. A static number of nodes is more representative of the reality of OSNs, as the number of users in an OSN would typically have a maximum (an absolute maximum arises from roughly the number of users on the internet, not counting multiple accounts). For a discussion of ranking models for complex networks, see~\cite{FFM, HP, JJPP2,TLPP}.

We now formally define the GEO-P model. The model produces a sequence $(G_t :t\ge 0)$ of undirected graphs on $n$ nodes, where $t$ denotes time. We write $G_t=(V_t,E_t).$ There are four parameters: the \emph{attachment strength} $\alpha \in (0,1)$, the \emph{density parameter} $\beta \in (0,1-\alpha)$, the \emph{dimension} $m \in \N$, and the \emph{link probability} $p \in (0,1]$. Each node $v\in V_t$ has rank $r(v,t) \in [n]$ (we use $[n]$ to denote the set $\{1,2,\dots, n\}$). The rank function $r(\cdot,t):V_t\rightarrow [n]$ is a bijection for all $t$, so every node has a unique rank. The highest ranked node has rank equal to 1; the lowest ranked node has rank $n$. The initialization and update of the ranking is done by \emph{random initial rank} (Other ranking schemes may also be used. We use random initial rank for its simplicity.) In particular, the node added at time $t$ obtains an initial rank $R_t$ which is randomly chosen from $[n]$ according to a prescribed distribution. Ranks of all nodes are adjusted accordingly. Formally, for each $v\in V_{t-1}$ that is not deleted at time $t$,
$$
r(v,t)=r(v,t-1)+\delta-\gamma,
$$
where $\delta=1$ if $r(v,t-1)>R_t$ and $0$ otherwise, and $\gamma =1$ if the rank of the node deleted in step $t$ is smaller than $r(v,t-1)$, and $0$ otherwise.

Let $S$ be the unit hypercube in $\R^m$, with the torus metric $d(\cdot,\cdot)$ derived from the $L_\infty$ metric. More precisely, for any two points $x$ and $y$ in $\R^m$, their distance is given by
$$
d(x,y)=\min\{ ||x-y+u||_\infty\,:\,u\in \{-1,0,1\}^m\}.
$$
The torus metric thus ``wraps around" the boundaries of the unit cube, so every point in $S$ is equivalent. The torus metric is chosen so that there are no boundary effects, and altering the metric will not significantly affect the main results.

To initialize the model, let $G_0=(V_0,E_0)$ be any graph on $n$ nodes that are chosen from $S$. We define the \emph{influence region} of node $v$ at time $t\geq 0$, written $R(v,t),$ to be the ball around $v$ with volume
$$
|R(v,t)|= r(v,t)^{-\alpha} n^{-\beta} \;.
$$
For $t\ge 1,$ we form $G_t$ from $G_{t-1}$ according to the following rules.
\begin{enumerate}
   \item Add a new node $v$ that is chosen \emph{uniformly at random} from $S$. Next, independently, for each node $u\in V_{t-1}$ such that $v \in R(u,t-1)$, an edge $vu$ is created with probability $p$. Note that the probability that $u$ receives an edge is proportional to $p\, r(u,t-1)^{-\alpha}.$ The negative exponent guarantees that nodes with higher ranks ($r(u,t-1)$ close to 1) are more likely to receive new edges than lower ranks.
   \item Choose uniformly at random a node $u \in V_{t-1}$, delete $u$ and all edges incident to $u$.
   \item Vertex $v$ obtains an initial rank $r(v,t)=R_t$ which is randomly chosen from $[n]$ according to a prescribed distribution.
   \item Update the ranking function $r(\cdot, t) : V_t \to [n].$
 \end{enumerate}

Since the process is an ergodic Markov chain, it will converge to a stationary distribution. (See~\cite{LPW} for more on Markov chains.) The random graph corresponding to this distribution with given parameters $\alpha, \beta, m, p$ is called the \emph{geo-protean} graph (or \emph{GEO-P} model), and is written {GEO-P}$(\alpha,\beta,m,p).$ The coupon collector problem can give us insight into when the stationary state will be reached. Namely, let $L = n(\log n + O(\omega(n)))$ where $\omega(n)$ is any function tending to infinity with $n$. It is a well-known result that, with probability tending to 1 as $n$ tends to infinity, after $L$ steps all original vertices will be deleted.

See Figure~\ref{fig:1} for a simulation of the model in the unit square.

\begin{figure}[h!]
\begin{center}
\epsfig{figure=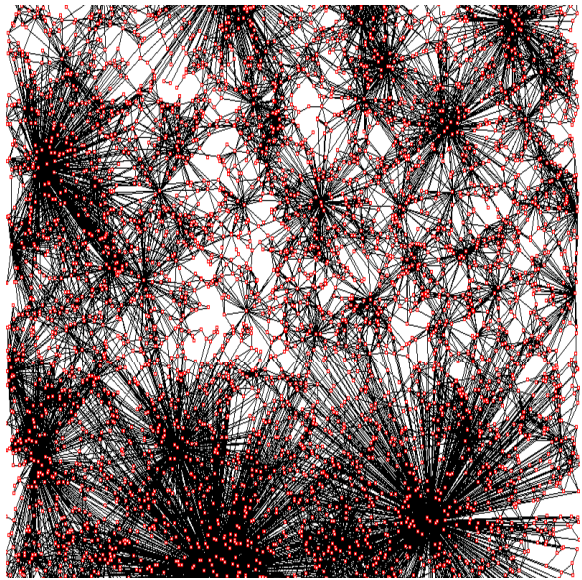, width=3in, height=3in}
\caption{A simulation of the GEO-P model, with $n=5,000$, $\alpha=0.7,$ $\beta =0.15,$ $m=2,$ and $p=0.9.$}\label{fig:1}
\end{center}
\end{figure}

\section{Results and Dimension}\label{results}

\subsection{Results}

We now state the main theoretical results we discovered for the geo-protean model, with proofs supplied in the next section. The model generates with high probability graphs satisfying each of the properties (i) to (iv) we discussed in the introduction. Proofs are presented in Section~\ref{proofs}. Throughout, we will use the stronger notion of \emph{wep} in favour of the more commonly used \emph{aas}, since it simplifies some of our proofs. We say that an event holds \emph{with extreme probability} (\emph{wep}), if it holds with probability at least $1-\exp (-\Theta(\log^2 n))$ as $n \to \infty$.  Thus, if we consider a polynomial number of events that each holds \emph{wep}, then \emph{wep} all events hold.

\bigskip

Let $N_k=N_k(n,p,\alpha,\beta)$ denote the number of nodes of degree $k,$ and $N_{\ge k}=\sum_{l\ge k} N_l$. The following theorem demonstrates that the geo-protean model generates power law graphs with exponent
\begin{equation}\label{bee}
b =1 + 1/\alpha.
\end{equation}
Note that the variables $N_{\ge k}$ represent the cumulative degree distribution, so the degree distribution of these variables has power law exponent $1/\alpha.$

\begin{theorem}\label{powerlaw}
Let $\alpha \in (0,1)$, $\beta \in (0,1-\alpha)$, $m \in \N$, $p\in (0,1]$, and
$$
n^{1-\alpha-\beta} \log^{1/2} n \le k \le n^{1-\alpha/2-\beta} \log^{-2 \alpha -1} n.
$$
Then \emph{wep} GEO-P$(\alpha,\beta,m,p)$ satisfies
$$
N_{\ge k} = \big(1+O(\log^{-1/3} n) \big) \frac {\alpha}{\alpha+1} p^{1/\alpha} n^{(1-\beta)/\alpha} k^{-1/\alpha}.
$$
\end{theorem}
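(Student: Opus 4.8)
The plan is to first pin down the expected degree of a node as a function of its rank and its age, and then to average this over the stationary distributions of rank and age to count the nodes of degree at least $k$. Fix a node $u$ alive at a stationary time $t$, with current rank $r$, born $a$ steps earlier. At each step the newborn node lands in $R(u,\cdot)$ with probability $|R(u,\cdot)|=r^{-\alpha}n^{-\beta}$ and then attaches with probability $p$, so $u$ gains an edge with probability $p\,r^{-\alpha}n^{-\beta}$; meanwhile $u$ loses a neighbour with probability $\approx\deg(u)/n$ (the deleted node being one of its neighbours). Writing $D(s)$ for the degree, this gives the approximate recursion $\E[D(s+1)-D(s)\mid\mathcal F_s]\approx p\,r^{-\alpha}n^{-\beta}-D(s)/n$, whose solution (from the negligible birth degree) is $\overline{d}(r,a):=p\,r^{-\alpha}n^{1-\beta}\big(1-e^{-a/n}\big)$. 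The factor $\alpha/(\alpha+1)$ in the statement will emerge precisely from averaging the age-dependent factor $1-e^{-a/n}$ over the stationary age distribution.

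Two auxiliary facts are needed. For the rank: taking $R_t$ uniform on $[n]$, the increment has conditional mean $\Prob[r>R_t]-\Prob[\text{deleted rank}<r]=0$, so the rank of a fixed node is a martingale with increments bounded by $1$ and per-step variance $O(r/n)$. Over a lifetime of $\Theta(n)$ steps its displacement is $O(\sqrt r\,\log n)$ \emph{wep}, which is $o(r)$ once $r\gg\sqrt n\,\mathrm{polylog}(n)$; the upper bound $k\le n^{1-\alpha/2-\beta}\log^{-2\alpha-1}n$ is exactly what forces the relevant ranks $r\le R:=(p\,n^{1-\beta}/k)^{1/\alpha}$ to satisfy $r\gtrsim\sqrt n$, so the rank may be treated as the constant $r$ throughout $u$'s life, up to an error absorbed into the final $O(\log^{-1/3}n)$. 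For the age: in the stationary regime a uniformly random live node survived each past deletion with probability $1-1/n$, so $\Prob[A\ge a]=(1-1/n)^a\approx e^{-a/n}$, and this age is asymptotically independent of the (uniformly distributed) current rank.

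Next I would upgrade $\overline d(r,a)$ to a \emph{wep} statement for each node: conditioning on the well-behaved rank trajectory, $\deg(u)$ is a sum of nearly independent indicators with mean of order $k$, so an Azuma--Hoeffding (or differential-equations-method) estimate gives $\deg(u)=(1+o(1))\,\overline d(r,A)$ \emph{wep}. Here the lower bound $k\ge n^{1-\alpha-\beta}\log^{1/2}n$ guarantees that the one-time birth contribution of order $n^{1-\alpha-\beta}$ is negligible to within the target error and that the mean is large enough for the fluctuations to be controlled. Consequently, for a node of current rank $r$,
$$
\Prob[\deg\ge k\mid r]\approx\Prob\big[\,1-e^{-A/n}\ge (r/R)^\alpha\,\big]=1-(r/R)^\alpha \quad(r\le R),
$$
and $0$ otherwise. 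Summing over the $n$ nodes (indexed bijectively by rank) and replacing the sum by an integral yields
$$
\E[N_{\ge k}]\approx\sum_{r\le R}\Big(1-(r/R)^\alpha\Big)\approx\int_0^R\Big(1-(r/R)^\alpha\Big)\,dr=\frac{\alpha}{\alpha+1}\,R,
$$
and substituting $R=p^{1/\alpha}n^{(1-\beta)/\alpha}k^{-1/\alpha}$ reproduces the claimed expression.

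The main obstacle is the passage from this expectation to the \emph{wep} statement for $N_{\ge k}$ itself, uniformly in $k$. One must show that $N_{\ge k}$ concentrates about its mean despite the dependencies between nodes' degrees (shared deletions and the coupled ranking); the natural route is a bounded-differences argument along the process, since adding or deleting a single node perturbs only $O(1)$ degrees and hence moves $N_{\ge k}$ by $O(1)$ per step. The genuinely delicate part is the simultaneous bookkeeping of the error terms---the rank-fluctuation error, the per-node degree-concentration error, the discrete-to-continuous and integral-approximation errors, and the age-approximation error---all of which must be dominated by $\log^{-1/3}n$; it is this balancing that pins down the precise polylogarithmic factors delimiting the admissible range of $k$. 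I expect verifying these estimates at once for every $k$ in the stated window, so that \emph{wep} the conclusion holds across the whole range simultaneously, to be the most technical step.
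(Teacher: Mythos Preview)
Your approach is essentially correct and arrives at the right expression, but it is organised differently from the paper's proof, and one of your justifications for the range of $k$ is slightly off.

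The paper first proves a degree formula (its Theorem~4.3) of the shape
\[
\deg(v_i,L)=(1+O(\log^{-1/2}n))\,p\left(\frac{i}{(1-\alpha)n}+\Big(\frac{R_i}{n}\Big)^{-\alpha}\frac{n-i}{n}\right)n^{1-\alpha-\beta},
\]
by a direct combinatorial count: $\deg^+$ is the surviving birth degree and $\deg^-$ counts younger vertices that attached to $v_i$. Your ODE derivation gives the same formula once one identifies $e^{-a/n}\approx i/n$; your decision to drop the birth term is exactly the paper's observation that the $\deg^+$ contribution is $O(n^{1-\alpha-\beta})=O(k\log^{-1/2}n)$ in the stated range. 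The real difference is in how the sum is taken: the paper indexes vertices by \emph{age rank} $i$ and integrates over the uniform \emph{initial rank} $R_i$, obtaining $\E N_{\ge k}\approx\sum_i (p n^{-\alpha-\beta}(n-i)k^{-1})^{1/\alpha}$; you index by \emph{current rank} $r$ and integrate over the exponential \emph{age}, obtaining $\sum_{r\le R}(1-(r/R)^\alpha)$. These are dual Fubini-type decompositions and yield the same $\frac{\alpha}{\alpha+1}R$. The paper's route is a little cleaner for the concentration step: since the initial ranks $R_i$ are independent uniform, $N_{\ge k}$ is (after conditioning on the age structure) a sum of nearly independent indicators, so Chernoff applies directly; your bounded-differences plan would also work but is heavier.

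One point to correct: you say the upper bound $k\le n^{1-\alpha/2-\beta}\log^{-2\alpha-1}n$ ``forces the relevant ranks $r\le R$ to satisfy $r\gtrsim\sqrt n$''. That is not quite the role it plays. The upper bound guarantees that the \emph{threshold} rank $R=p^{1/\alpha}n^{(1-\beta)/\alpha}k^{-1/\alpha}$ is at least of order $\sqrt{n}\,\log^{2+1/\alpha}n$; this matters because (a) it puts the decisive boundary ranks into the regime where the rank-martingale concentration is valid, and, more importantly, (b) it makes $\E N_{\ge k}=\Theta(R)$ large enough that the Chernoff/bounded-differences step gives a \emph{wep} conclusion. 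Ranks well below $\sqrt n$ are not a problem for the count itself, since those vertices have degree $\gg k$ deterministically and simply contribute $1$ to the sum.
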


Our next results shows that geo-protean graphs are relatively dense. For a graph $G=(V,E)$ of order $n,$ define the \emph{average degree of $G$} by $d=\frac {2 |E|}{n}.$
\begin{theorem}\label{adeg}
\emph{Wep} the average degree of GEO-P$(\alpha,\beta,m,p)$ is
\begin{equation}
d = (1+o(1)) \frac {p}{1-\alpha} n^{1-\alpha-\beta}.  \label{eq:average1}
\end{equation}
\end{theorem}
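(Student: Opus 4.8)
The plan is to pin down the expected number of edges $\E|E|$ up to a $(1+o(1))$ factor by a conservation argument, and then upgrade this to a \emph{wep} statement by showing every individual degree concentrates. First I would compute the expected degree of the vertex $v$ born at step $t$. Since $v$ is chosen uniformly on the unit-volume torus $S$ and is joined to an existing node $u$ with $v\in R(u,t-1)$ with probability $p$, node $u$ receives this edge with probability exactly $p\,|R(u,t-1)| = p\,r(u,t-1)^{-\alpha}n^{-\beta}$. Because the rank function is a bijection onto $[n]$, summing over $u\in V_{t-1}$ shows that the expected degree of a newly born vertex depends only on the rank multiset $[n]$ and hence equals the constant
$$
D := p\,n^{-\beta}\sum_{r=1}^{n} r^{-\alpha} = (1+o(1))\frac{p}{1-\alpha}\,n^{1-\alpha-\beta},
$$
using $\sum_{r=1}^{n} r^{-\alpha} = (1+o(1))\,n^{1-\alpha}/(1-\alpha)$ for $\alpha\in(0,1)$.

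Next I would set up the edge-conservation recurrence. Writing $M_t=|E_t|$, at each step the new vertex contributes $D_v$ edges (with $\E D_v = D$), while the deleted vertex $u$, chosen uniformly among the $n$ vertices of $V_{t-1}$, removes its current degree in the post-insertion graph; since $\E[\deg(u)\mid G_{t-1},v]=(2M_{t-1}+D_v)/n$, taking expectations yields
$$
\E[M_t\mid G_{t-1}] = M_{t-1}\Big(1-\tfrac2n\Big) + D\Big(1-\tfrac1n\Big).
$$
As the chain is ergodic, its stationary expectation is the fixed point of this affine map, namely $\E|E| = \tfrac{n-1}{2}\,D = (1+o(1))\,\tfrac{p}{2(1-\alpha)}\,n^{2-\alpha-\beta}$, so that $\E[d]=2\E|E|/n = (1+o(1))\tfrac{p}{1-\alpha}n^{1-\alpha-\beta}$, matching \eqn{eq:average1} in expectation.

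To obtain the \emph{wep} conclusion I would show that $\sum_v\deg(v)=2|E|$ concentrates. The crucial observation is that the model is dense: a new vertex already has expected degree $D=\Theta(n^{1-\alpha-\beta})$ \emph{independent of its assigned rank} (the birth degree is governed by the existing regions, not by $v$'s own rank), and every vertex maintains an expected degree of at least this polynomial order throughout its life, so no vertex has bounded expected degree. Conditioning on a vertex's rank trajectory, the edges it acquires are independent events, so a Chernoff bound gives \emph{wep} $\deg(v)=\big(1+O(\log n\cdot n^{-(1-\alpha-\beta)/2})\big)\,\E[\deg(v)]$, a relative error that is $o(1)$ \emph{uniformly} in $v$ since the worst case occurs at the minimum-degree scale. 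Summing these polynomially many \emph{wep} estimates gives $2|E|=(1+o(1))\sum_v\E[\deg(v)]=(1+o(1))\cdot 2\E|E|$, and dividing by $n$ yields the claim.

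The main obstacle is making the per-vertex concentration rigorous and uniform down to the scale $n^{1-\alpha-\beta}$. The difficulty is that a vertex's rank is not fixed but performs a $\pm1$ random walk as vertices are inserted and deleted, so the per-step attachment probability $p\,r(v,t)^{-\alpha}n^{-\beta}$ fluctuates over the vertex's lifetime. I would control this by first showing that \emph{wep} the rank stays within a $(1+o(1))$ factor of a deterministic trajectory --- the same rank-concentration estimate that underlies the proof of Theorem~\ref{powerlaw} --- and then applying the Chernoff bound conditionally on that trajectory. Since this is precisely the technical core already developed for the power-law analysis, reusing it keeps the proof of Theorem~\ref{adeg} self-contained.
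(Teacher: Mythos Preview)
Your stationarity recurrence for the expectation is correct and is a genuinely different route from the paper's. The paper instead uses the identity $|E|=\sum_{i=1}^n \deg^+(v_i,L)$, where $\deg^+(v_i,L)$ counts only the neighbours of $v_i$ that are \emph{older} than $v_i$. Since each edge has a unique younger endpoint, this sum equals $|E|$ exactly. The point is that $\deg^+(v_i,L)$ is governed by $v_i$'s birth degree (which, as you observed, has expectation $D$ regardless of $v_i$'s own rank) together with a uniform thinning down to the $i-1$ surviving older vertices; it therefore concentrates around $(1+o(1))\frac{p}{1-\alpha}\frac{i}{n}n^{1-\alpha-\beta}$ \emph{without any appeal to rank concentration}. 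Summing over $i$ gives the result in three lines.

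Your route works at the level of expectation, but the concentration step has a gap. You propose to concentrate each $\deg(v)$ by conditioning on the rank trajectory and invoking the rank-stability lemma from the power-law analysis. That lemma, however, only controls $r(v,t)$ when the initial rank satisfies $R\ge \sqrt{n}\log^2 n$; for the $\Theta(\sqrt{n}\log^2 n)$ vertices with smaller initial rank, the trajectory is not pinned down, and $\deg^-(v)$ can be anywhere up to order $n^{1-\beta}$. You would need a separate argument bounding the aggregate contribution of these high-degree ``hubs'' to $2|E|$, and the crude bound $\sqrt{n}\log^2 n\cdot n^{1-\beta}$ is only $o(n^{2-\alpha-\beta})$ when $\alpha<1/2$. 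This is repairable (e.g.\ via a finer dyadic decomposition of the small ranks, or simply by switching to $\deg^+$ at this stage), but as written the proof does not close for all $\alpha\in(0,1)$. The paper's $\deg^+$ identity sidesteps the issue entirely, which is what makes that argument shorter.
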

Note that the average degree tends to infinity with $n$;  that is, the model generates graphs satisfying a \emph{densification power law}. In~\cite{les1}, densification power laws were reported in several real-world networks such as the physics citation graph and the internet graph at the level of autonomous systems.

\bigskip

Our next result describes the diameter of graphs sampled from the GEO-P model. While the diameter is not shrinking, it can be made constant by allowing the dimension to grow as a logarithmic function of $n.$
\begin{theorem}\label{thm:diam}
Let $\alpha \in (0,1)$, $\beta \in (0,1-\alpha)$, $m \in \N$, and $p\in (0,1]$. Then \emph{wep} the diameter $D$ of GEO-P$(\alpha,\beta,m,p)$ satisfies
\begin{equation} \label{eq:dee}
D = \Omega(n^{\frac {\beta}{(1-\alpha)m}} \log^{ \frac {-\alpha}{m} } n), \mbox{ and } D = O(n^{\frac {\beta}{(1-\alpha)m}} \log^{ \frac {2 \alpha}{(1-\alpha)m} } n).
\end{equation}
In particular, \emph{wep} the order of the diameter can be expressed as:
\[
\log D = \frac {\beta}{(1-\alpha)m} \log n + O \left( \frac{\log\log n}{m} \right).
\]
\end{theorem}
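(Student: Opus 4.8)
The plan is to show that the diameter is governed by a single critical length scale, which I isolate first. Since $d$ is the $L_\infty$ torus metric, each influence region $R(v,t)$ is an axis-parallel cube, and a region of volume $V$ has half-side (``radius'') $\tfrac12 V^{1/m}$; hence a node of rank $r$ has region radius $\rho(r):=\tfrac12 r^{-\alpha/m} n^{-\beta/m}$. The combined volume of the regions of the $r$ highest-ranked nodes is $\sum_{i\le r} i^{-\alpha}n^{-\beta}=(1+o(1))\frac{r^{1-\alpha}}{1-\alpha}n^{-\beta}$, which first becomes of order $1$ at the critical rank $r_0:=n^{\beta/(1-\alpha)}$. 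A direct computation gives $\rho(r_0)=\tfrac12 n^{-\beta/((1-\alpha)m)}=:\rho_0$, and the typical spacing $r_0^{-1/m}=2\rho_0$ of the top $r_0$ nodes is of the same order. Thus at scale $\rho_0$ the regions of the highest-ranked nodes just begin to cover $S$ and to overlap, while at any smaller scale they do not; since the torus has diameter $\Theta(1)$, crossing it should require $\Theta(1/\rho_0)=\Theta(n^{\beta/((1-\alpha)m)})$ hops. The proof consists in making both inequalities precise, the polylogarithmic slack coming from the oversampling needed to force \wep\ events.

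For the upper bound I would fix $R:=r_0\log^{2/(1-\alpha)} n$ and establish a covering lemma: partitioning $S$ into $\Theta(\rho_0^{-m})=\Theta(n^{\beta/(1-\alpha)})$ cubical cells of side $\Theta(\rho_0)$, \wep\ every cell lies inside the region of some node of rank at most $R$. Indeed, the expected number of such nodes whose region contains a fixed point is $\sum_{i\le R} i^{-\alpha}n^{-\beta}=\Theta(\log^2 n)$, so the failure probability per cell is $\exp(-\Theta(\log^2 n))$ and a union bound over the polynomially many cells preserves \wep. A rank-$\le R$ node has region radius $\rho(R)=\rho_0\log^{-2\alpha/((1-\alpha)m)}n$, so these ``backbone'' regions overlap in a pattern whose intersection graph crosses the torus in $O(1/\rho(R))=O(n^{\beta/((1-\alpha)m)}\log^{2\alpha/((1-\alpha)m)}n)$ steps. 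Two further facts finish the bound: (a) whenever two backbone regions overlap, \wep\ some node lands in the overlap and attaches to both endpoints, so overlapping backbone nodes are within $O(1)$ graph-hops; and (b) every node $v$ is \wep\ within one hop of a backbone node, since $v$ is covered by $\Theta(\log^2 n)$ backbone regions and attaches to each independently with probability $\Theta(p)$, giving failure probability $\exp(-\Theta(\log^2 n))$ (this also rules out isolated vertices, so the diameter is finite \wep). Concatenating an edge from $x$ to the backbone, a backbone path, and an edge to $y$ yields the claimed upper bound.

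For the lower bound it suffices to exhibit (\wep) two nodes at graph distance $\Omega(n^{\beta/((1-\alpha)m)}\log^{-\alpha/m}n)$; I take two nodes at geometric distance $\Theta(1)$, which exist \wep. The key estimate is that long edges are rare. Set $\lambda^\ast:=\rho_0\log^{\alpha/m}n$. An edge of length exceeding $\lambda^\ast$ forces one endpoint to have had region radius $>\lambda^\ast$ when the other was born, i.e. rank at most $r(\lambda^\ast)=r_0/\log n$, with the other endpoint within $\lambda^\ast$ of it. The expected number of rank-$\le r_0/\log n$ nodes inside a ball of radius $\lambda^\ast$ is $(r_0/\log n)\,(2\lambda^\ast)^m=\Theta(\log^{-(1-\alpha)}n)=o(1)$; summing this contribution over the history of the process, one shows \wep\ that the long edges are too few to let any path accumulate more than $o(1)$ of displacement through them. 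Hence a path of length $\ell$ advances the geometric position by at most $\ell\lambda^\ast+o(1)$, and connecting two points at distance $\Theta(1)$ forces $\ell\ge \Omega(1/\lambda^\ast)=\Omega(n^{\beta/((1-\alpha)m)}\log^{-\alpha/m}n)$.

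Combining the two displays and taking logarithms gives $\log D=\frac{\beta}{(1-\alpha)m}\log n+O\!\big(\tfrac{\log\log n}{m}\big)$, since every polylogarithmic factor carries an exponent of the form $c/m$ and so contributes only $O(\log\log n/m)$. The main obstacle throughout is that edges record \emph{historical} influence regions: an edge $uv$ reflects the rank one endpoint had when the other was born, not its current rank, and over its $\Theta(n)$-step lifetime a node's rank performs a mean-reverting walk on $[n]$. Thus the covering and attachment statements in the upper bound, and the ``few long edges'' statement in the lower bound, must be proved for at-birth (respectively best-ever) ranks rather than current ranks. I expect to control this exactly as in the proofs of Theorems~\ref{powerlaw} and~\ref{adeg}: work in the stationary distribution, integrate each estimate against the distribution of a node's rank over its lifetime, and use the \wep\ concentration afforded by the $\exp(-\Theta(\log^2 n))$ oversampling to take union bounds over all nodes, cells, and time steps simultaneously.
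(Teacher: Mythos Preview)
Your high-level strategy---backbone of high-rank nodes for the upper bound, scarcity of long edges for the lower bound---is the same as the paper's, and your identification of the critical scale $\rho_0=n^{-\beta/((1-\alpha)m)}$ is correct. Two of the steps, however, do not go through as written.

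\textbf{Upper bound, step (b).} You claim every vertex $v$ is within \emph{one} hop of the backbone because its position is covered by $\Theta(\log^2 n)$ backbone regions. But an edge $uv$ records the rank of the \emph{older} endpoint at the moment the younger one is born. If $v$ is among the oldest surviving vertices, then the backbone nodes at time $L$ are almost all younger than $v$, and whether $uv$ exists then depends on $v$'s rank history, not $u$'s; being covered by $u$'s \emph{current} region is irrelevant. The paper handles this by restricting the backbone to vertices with initial rank at most $R$ \emph{and} age rank in $[n/4,n/2]$ (so their ranks are controllable via the rank-stability lemmas, and there is a guaranteed supply of strictly younger vertices), and then shows that every $v$ is within distance \emph{two} of the backbone: $v$ first links to some vertex $w$ of age rank $>n/2$ lying in the minimal cube of volume $n^{-\alpha-\beta}$ around $v$ (this link exists with probability $p$ regardless of which of $v,w$ is older), and $w$, being younger than every backbone vertex, links with probability $p$ to the backbone vertex in its own subcube.

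\textbf{Lower bound.} The path-displacement argument fails: a single long edge can contribute up to $\tfrac12$ of geometric displacement, so the assertion that ``long edges are too few to let any path accumulate more than $o(1)$ of displacement through them'' is false for any path that touches even one hub---and there are $r_0/\log n$ hubs, so nothing prevents the shortest $x$--$y$ path from routing through one. Your computation $(r_0/\log n)(2\lambda^\ast)^m=o(1)$ only says that a \emph{fixed} ball of radius $\lambda^\ast$ is typically hub-free, which is far too local. What is actually needed (and what the paper does) is a \emph{constant-radius} hub-free region: since the total volume of all hub influence regions is $\sum_{r\le R} r^{-\alpha}n^{-\beta}=\Theta(R^{1-\alpha}n^{-\beta})=o(1)$, there exist a vertex $v$ and a constant $c\in(0,1/3)$ such that the ball $B(v,c)$ meets no hub region at all. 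Every edge with an endpoint in $B(v,c)$ is then short, so escaping $B(v,c)$ from $v$ along any path requires at least $c/\lambda^\ast=\Omega\big(n^{\beta/((1-\alpha)m)}\log^{-\alpha/m}n\big)$ steps.
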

We note that in a geometric model where regions of influence have constant volume and possessing the same average degree as the geo-protean model, the diameter is $\Theta( n^{\frac{\alpha+\beta}{m}} ).$ This is a larger diameter than in the GEO-P model. If $m = C \log n$, for some constant $C>0$, then \emph{wep} we obtain a diameter bounded above by a constant.

\bigskip

Let $G=(V,E)$ be a graph. For sets of vertices $X,Y \subseteq V$, define $e(X,Y)$ to be the set of edges with one endpoint in $X$ and the other in $Y.$ For simplicity, we write $e(X)=e(X,X).$ Let $N(v)$ be the neighbour set of the vertex $v.$ The \emph{clustering coefficient} of vertex $v \in V$ is defined as follows:
$$
c(v) = \frac { e(N(v)) } { {\deg(v) \choose 2} }.
$$
(Note that, formally, we need to assume that $\deg(v) \ge 2$ above, but this is \emph{aas} the case in our model. One can define $c(v)=0$ when $\deg(v) \le 1$.) In other words, $c(v) \in [0,1]$ is the probability that two different neighbours of $v$, chosen uniformly at random, are adjacent. In the random graph $G(n,p)$, the expected value of $c(v)$ is $p$ for any vertex $v$. The \emph{clustering coefficient} of $G$ is defined as
$$
c(G) = \frac {1}{|V|} \sum_{v \in V} c(v).
$$
Hence, $\E( c( G(n,p) )) = p$. We prove that \emph{wep} the GEO-P model, for some values of $m,$ generates graphs with higher clustering coefficient than in a random graph $G(n,d/n)$ with the same expected average degree. It follows from Theorem~\ref{adeg} that
$$
d = (1+o(1)) \frac {p}{1-\alpha} n^{1-\alpha-\beta}.
$$
We use the notation $\lfloor x\rfloor_2$ to denote the largest {\sl even} integer smaller than or equal to $x$.

\begin{theorem}\label{cluster}
\emph{Wep} the clustering coefficient of $G$ sampled from GEO-P$(\alpha,\beta,m,p)$ satisfies the following inequality
\begin{eqnarray*}
c(G) &\ge& (1+o(1)) \left( \frac 34 \left(1 - \frac {2}{3K} \right) \right)^m \left( \frac{1-\alpha }{1+\alpha } \right)p \\
&=& (1+o(1)) \exp \left(  -f\left( \frac mK \right) \right) \left( \frac 34 \right)^m \left( \frac{1-\alpha }{1+\alpha } \right)p,
\end{eqnarray*}
where $f(\frac mK ) = \Theta(\frac mK )$, and
$$
K = \left\lfloor \left( \frac {n^{1-\alpha-\beta} }{\log^3 n} \right)^{1/m} \right\rfloor_2.
$$
\end{theorem}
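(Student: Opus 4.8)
The plan is to work in the stationary distribution and to bound, from below, the clustering coefficient $c(v)=e(N(v))/\binom{\deg v}{2}$ of a typical vertex $v$ before averaging. Because $c(G)$ is an average of ratios, I would first record the per-rank refinement of Theorem~\ref{adeg}: \emph{wep} a vertex of rank $r$ has degree concentrated at a value $D(r)$ comparable to $p\,r^{-\alpha}n^{1-\beta}$. This gives a \emph{wep} upper bound $\binom{\deg v}{2}=(1+o(1))\tfrac12 D(r)^2$, so the task reduces to a \emph{wep} lower bound on $e(N(v))$, the number of edges among the neighbours of $v$. To make both the geometry and the concentration tractable, I would overlay the grid of $K^m$ congruent subcubes of side $1/K$; the calibration $K=\lfloor(n^{1-\alpha-\beta}/\log^3 n)^{1/m}\rfloor_2$ is chosen so that the relevant influence regions are resolved at the scale of a cell and so that each relevant cell contains $\gg\log^2 n$ nodes, which is what allows expectations to be upgraded to \emph{wep} statements.

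The geometric heart of the lower bound is elementary. Since the metric is $L_\infty$, every influence region is a cube; if two neighbours $u_1,u_2$ of $v$ both lie in a region of side $s$ and have their own regions of side comparable to $s$, then $u_1$ and $u_2$ are essentially uniform in a cube of side $s$ about $v$, and they are mutually adjacent exactly when $\|x_1-x_2\|_\infty\le s/2$ (and a coin of probability $p$ turns up). Per coordinate, two uniform points on an interval of length $s$ satisfy $|x_1-x_2|\le s/2$ with probability $1-(1-\tfrac12)^2=\tfrac34$, so over the $m$ independent coordinates this probability is $(\tfrac34)^m$. Replacing the continuous positions by their grid cells turns the per-coordinate $\tfrac34$ into $\tfrac34(1-\tfrac{2}{3K})$, which is exactly the correction appearing in the statement; the identity $(1-\tfrac{2}{3K})^m=\exp(-f(m/K))$ with $f(m/K)=-m\log(1-\tfrac{2}{3K})=\Theta(m/K)$ then yields the second, exponential form. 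I would bound $e(N(v))$ from below by counting \emph{only} this favourable family of triangles---those all of whose vertices have influence regions of comparable side length---which is why the conclusion is an inequality rather than an identity.

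Assembling the per-vertex estimate, a vertex of rank $r=\theta n$ has $\deg v=(1+o(1))D(\theta n)$, and restricting $e(N(v))$ to the comparable-scale triangles above gives, after carrying the region-volume scaling $\propto r^{-\alpha}$ through the neighbour count, a bound of the shape $c(v)\ge(1+o(1))(1-\alpha)\big(\tfrac34(1-\tfrac{2}{3K})\big)^m p\,\theta^{\alpha}$. Averaging over $v$ is then an integration over $\theta=r/n\in(0,1]$, and since $\int_0^1\theta^{\alpha}\,d\theta=\tfrac1{1+\alpha}$ the factor $(1-\alpha)$ from the per-vertex constant combines with $\tfrac1{1+\alpha}$ to produce the stated $\tfrac{1-\alpha}{1+\alpha}$, the surviving coin contributing the factor $p$.

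The main obstacle is the simultaneous concentration, \emph{with extreme probability}, of the numerator and denominator of $c(v)$: a lower bound on the average of the ratios $e(N(v))/\binom{\deg v}{2}$ needs a \emph{wep} upper bound on $\deg v$ together with a \emph{wep} lower bound on the edge count among the neighbours, and both quantities are governed by the same rank-dependent geometry. The grid is introduced precisely to control this---within a cell all influence regions are comparable and the node counts are $\gg\log^2 n$, so Chernoff/Azuma-type inequalities give failure probability $\exp(-\Theta(\log^2 n))$, and a union bound over the polynomially many cells and vertices preserves \emph{wep}. The genuinely delicate point is the dependence created by the evolving rank function: a node's region changes throughout its lifetime, so ``being a neighbour of $v$'' is not a simple function of the current ranks. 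I would handle this by first showing that a node's rank, and hence its region volume, is stable enough over the $O(n\log n)$ time window that matters, so that the neighbour sets and the triangle counts can be analysed as if the ranks were frozen at their typical values.
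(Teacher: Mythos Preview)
Your high-level strategy---lower-bound $e(N(v))$ geometrically, upper-bound $\binom{\deg v}{2}$, then average---matches the paper, and your heuristic for the $(3/4)^m$ factor and its grid correction $(1-\tfrac{2}{3K})$ is exactly right. But two concrete steps are wrong, and they are load-bearing.

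First, your degree formula $D(r)\sim p\,r^{-\alpha}n^{1-\beta}$ is not the degree of a vertex in this model. In GEO-P the degree depends on \emph{two} parameters, the age rank $i=an$ and the initial rank $R_i=bn$: \emph{wep}
\[
\deg(v_i,L)=(1+o(1))\,p\left(\frac{a}{1-\alpha}+b^{-\alpha}(1-a)\right)n^{1-\alpha-\beta}
\]
(this is Theorem~\ref{thm:labels_degree}). A vertex picks up roughly $\tfrac{p}{1-\alpha}n^{1-\alpha-\beta}$ edges at birth and then loses them at rate $1/n$, independently of its own rank; its rank only governs how many \emph{younger} neighbours it attracts. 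So the denominator in $c(v)$ is not a single-variable function of the current rank, and your one-variable average $\int_0^1\theta^\alpha\,d\theta=\tfrac{1}{1+\alpha}$, multiplied by an unexplained ``per-vertex constant'' $(1-\alpha)$, is not the computation that produces $\tfrac{1-\alpha}{1+\alpha}$. The paper instead shows that the lower bound on $e(N(v))$ is \emph{uniform} over $v$ (it comes only from the minimal hypercube, see below), so all the $(a,b)$-dependence is in the denominator, and
\[
\frac{1-\alpha}{1+\alpha}=\int_0^1\!\int_0^1\left(\frac{a}{1-\alpha}+b^{-\alpha}(1-a)\right)^{-2}db\,da
\]
is a genuine double integral that has to be evaluated.

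Second, your grid is at the wrong scale, and the ``comparable-scale triangles'' idea is not how the lower bound on $e(N(v))$ is obtained. The paper does not restrict to neighbours with influence regions comparable to $v$'s; it restricts to neighbours lying in the \emph{minimal} hypercube of $v$, the cube of volume $V_{\min}=n^{-\alpha-\beta}$ centred at $v$. Every vertex in this cube is within the smallest possible influence radius of every other, so any two of them are adjacent with probability $p$ regardless of ranks. The $K^m$ grid partitions this minimal cube (not the unit cube) into cells of volume $\log^3 n/n$; condition $(\ast)$ that $|i_s-j_s|\le K/2-1$ in every coordinate guarantees the edge, and the combinatorial sum over cell pairs satisfying $(\ast)$ gives exactly $\big(\tfrac34(1-\tfrac{2}{3K})\big)^m$. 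This yields
\[
e(N(v))\ge(1+o(1))\left(\tfrac34\big(1-\tfrac{2}{3K}\big)\right)^m\frac{(p\,n^{1-\alpha-\beta})^2}{2}\,p,
\]
a bound that does not depend on $v$ at all. Your proposal to carry ``the region-volume scaling $\propto r^{-\alpha}$ through the neighbour count'' would inject a rank dependence that is neither needed nor correct for the lower bound, and makes the averaging step go astray.
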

\noindent Note that if
$$
m \le  (1-\alpha-\beta) \frac {\log n}{\log \log n} \left( 1 - \frac {1}{\log \log n} \right) = (1+o(1)) (1-\alpha-\beta) \frac {\log n}{\log \log n},
$$
then $K \gg m$, and the clustering coefficient of GEO-P$(\alpha,\beta,m,p)$ is \emph{wep} at least
$$
(1+o(1)) \left( \frac 34 \right)^m \left( \frac{1-\alpha }{1+\alpha } \right) p = n^{o(1)} \gg (1+o(1)) \frac {p}{1-\alpha} n^{-\alpha-\beta} = c(G(n,d/n)).
$$
Hence, the clustering coefficient is larger than that of a comparable random graph.

If  $m = o(\log n)$ but large enough so that the condition $K\gg m$ does not hold, a similar result holds but the error term is not $(1+o(1))$ anymore. In this case, \emph{wep},
$$
c(G) \ge \left( \frac 34 \right)^{m + o(m)} = n^{o(1)} \gg c(G(n,d/n)).
$$
Finally, if
$$
m = (1+o(1)) b (1-\alpha-\beta) \log n
$$
for some constant $b \in (0,1)$, then $K \ge \lfloor e^{1/b} \rfloor_2 \ge e^{1/b}-2$, and so \emph{wep}
\begin{eqnarray*}
c(G) &\ge& \left( \frac 34 \left( 1 - \frac {2}{3 (e^{1/b}-2)} \right) (1+o(1)) \right)^m \\
&=& \exp \left( b (1-\alpha-\beta) \log \left( \frac 34 - \frac {1}{2 (e^{1/b}-2)} \right) (1+o(1)) \log n \right).
\end{eqnarray*}
For the random graph counterpart we have that \emph{wep}
$$
c(G(n,d/n)) = (1+o(1)) \frac {p}{1-\alpha} n^{-\alpha-\beta} = \exp \Big( (-\alpha-\beta) (1+o(1)) \log n \Big).
$$
Thus, we get larger clustering coefficient for $b$ small enough; that is, for $b < b_0$, where $b_0 = b_0(\alpha, \beta)$ satisfies the following equation
$$
b (1-\alpha-\beta) \log \left( \frac 34 - \frac {1}{2 (e^{1/b}-2)} \right) = -\alpha-\beta.
$$
(Note that the function on the left hand side is increasing and tends to 0 as $b \to 0$.) For $b > b_0$ we get the opposite behaviour; that is, the clustering coefficient is smaller compared to the binomial random graph counterpart.

\bigskip

The normalized Laplacian of a graph relates to important graph properties; see \cite{sgt}. Let $A$ denote the adjacency matrix and $D$ denote the diagonal degree matrix of a graph $G$.  Then the normalized Laplacian of $G$ is $\mathcal{L} = I - D^{-1/2}AD^{-1/2}.$ Let $0 = \lambda_0 \leq \lambda_1 \leq \cdots \leq \lambda_{n-1} \leq 2$ denote the eigenvalues of $\mathcal{L}$.  The \emph{spectral gap} of the normalized Laplacian is
$$
\lambda = \max\{ |\lambda_1 - 1|, |\lambda_{n-1} - 1| \}.
$$
A spectral gap bounded away from zero is an indication of bad expansion properties. Bad expansion is characteristic for OSNs: see property ($iv$) in the introduction. The next theorem represents a drastic departure from the good expansion found in binomial random graphs, where $\lambda = o(1)$~\cite{sgt,CL}.
\begin{theorem}\label{thm:expa}
Let $\alpha \in (0,1)$, $\beta \in (0,1-\alpha)$, $m \in \N$, and $p\in (0,1]$. Let $\lambda(n)$  be the spectral gap of the normalized Laplacian of GEO-P$(\alpha,\beta,m,p)$. Then \emph{wep}
\begin{enumerate}
  \item If $m = m(n) = o(\log n)$, then $\lambda(n) = 1+o(1).$
  \item If $m = m(n) = C \log n$ for some $C > 0$, then
$$
\lambda (n) \ge 1 - \exp \left( - \frac {\alpha+\beta}{C}\right).
$$
\end{enumerate}
\end{theorem}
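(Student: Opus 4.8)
The plan is to lower-bound the spectral gap through the bottom of the spectrum alone. Since every eigenvalue of $\mathcal L$ lies in $[0,2]$ we have $\lambda \ge |\lambda_1 - 1|$, and as soon as we know $\lambda_1 \le 1$ this reads $\lambda \ge 1 - \lambda_1$; the top eigenvalue $\lambda_{n-1}$ can only increase $\lambda$ and will simply be discarded. So everything reduces to an upper bound on $\lambda_1$, for which I will use the variational characterisation
$$
\lambda_1 = \min_{g \,:\, \sum_v \deg(v)\, g(v) = 0} \ \frac{\sum_{uv \in E}\bigl(g(u)-g(v)\bigr)^2}{\sum_v \deg(v)\, g(v)^2}.
$$
Evaluating this quotient on a single admissible $g$ bounds $\lambda_1$ directly, with no loss of a Cheeger-type factor; keeping the exact constant, rather than an unspecified $O(1)$, is precisely what part (ii) will demand.

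First I would take $g = \mathbf 1_H - \theta$, the recentred indicator of the half-torus $H = \{v : x_{v,1} \in [0,\tfrac12)\}$, where $\theta$ is the degree-weighted mean of $\mathbf 1_H$, so that the orthogonality constraint holds; torus symmetry gives $\theta = \tfrac12 + o(1)$ \emph{wep}. Then $g(v)^2 = \tfrac14 + o(1)$ for every $v$, so the denominator is $(\tfrac14+o(1))\sum_v \deg(v) = (\tfrac14+o(1))\, nd$, controlled \emph{wep} by Theorem~\ref{adeg}. The numerator is exactly $e(H,\bar H)$, the number of edges meeting both sides of the boundary hyperplanes $\{x_1 = 0\}$ and $\{x_1=\tfrac12\}$. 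This is where the geometry does the work: each edge $uv$ satisfies $d(x_u,x_v) \le s_r$, the influence radius $s_r = \tfrac12 (r^{-\alpha}n^{-\beta})^{1/m}$ of the endpoint whose region captured the other at birth, so a crossing edge forces an endpoint to sit within one influence radius of a boundary. Summing the expected number of crossing edges over capturing nodes --- each of rank $r$ contributing $\sim p n^{1-\beta} r^{-\alpha}$ edges of which a fraction $\sim s_r$ cross --- and integrating over $r \in [1,n]$ yields a crossing count of order $n^{2-\alpha-\beta-(\alpha+\beta)/m}$. Dividing by $nd = \tfrac{p}{1-\alpha}n^{2-\alpha-\beta}$ gives a bound of the shape $\lambda_1 \le (\mathrm{const})\, n^{-(\alpha+\beta)/m}$ \emph{wep}.

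Both conclusions then fall out by substituting the growth of $m$. If $m = o(\log n)$ then $n^{-(\alpha+\beta)/m} = \exp\!\bigl(-(\alpha+\beta)\tfrac{\log n}{m}\bigr) = o(1)$, whence $\lambda_1 = o(1)$ and $\lambda \ge 1 - \lambda_1 = 1 - o(1)$; together with the automatic bound $\lambda \le 1$ this is $\lambda = 1 + o(1)$, proving (i). If $m = C\log n$ the same expression equals the constant $\exp(-(\alpha+\beta)/C) \in (0,1)$, and $\lambda \ge 1 - \lambda_1$ reads $\lambda \ge 1 - \exp(-(\alpha+\beta)/C)$, proving (ii).

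The crux is the constant in (ii). For (i) the target is only $o(1)$, so any $O(n^{-(\alpha+\beta)/m})$ bound suffices and the prefactor is irrelevant; but in (ii) the prefactor multiplies a $\Theta(1)$ quantity, so I must extract an honest leading constant --- a plain half-torus cut gives \emph{some} constant, and pushing it down to what (ii) requires is the delicate part. The subtlety is the rank dependence: high-rank nodes have influence radii $s_r$ far larger than the typical scale $n^{-(\alpha+\beta)/m}$ and so contribute disproportionately to $e(H,\bar H)$, so the crossing-edge integral must be evaluated precisely, possibly after minimising over a family of parallel cuts to sharpen the constant. Establishing that both $e(H,\bar H)$ and the volume concentrate \emph{wep} --- via the degree estimates of Theorem~\ref{adeg} and the same martingale / differential-equations arguments used there --- is the remaining technical burden.
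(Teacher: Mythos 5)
Your proposal is correct and is, at its core, the same argument as the paper's: everything reduces to showing that the half-torus bisection $S_1=\{x : x_1\le 1/2\}$ is a sparse cut, with the crossing count controlled by the fact that an edge captured by a rank-$r$ vertex has length at most $\approx (r^{-\alpha}n^{-\beta})^{1/m}$, yielding $e(S_1,S_2)$ of order $n^{2-\alpha-\beta-(\alpha+\beta)/m}$ --- exactly the content of the paper's Theorem~\ref{thm:cuts}. The only real difference is the conversion from cut to spectrum: the paper invokes the expander mixing lemma for the normalized Laplacian (Lemma~\ref{mix}) with $X=S_1$, whereas you plug the recentred indicator of the half-torus directly into the Rayleigh quotient for $\lambda_1$. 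Since the mixing lemma is itself proved by precisely such a test-vector computation, the two routes are equivalent in strength here; yours is more self-contained and makes the role of the test function explicit, while the paper's reuses a standard citation and avoids verifying the degree-weighted orthogonality of $g$. Two remarks on what you flag as the remaining burden. First, the constant in part (ii) is less delicate than you fear: no minimisation over a family of parallel cuts is needed, since the straightforward first-moment computation you sketch, carried out carefully (crossing fraction $\sim s_r$ per captured edge, summed over ranks), already produces the bound $e(S_1,S_2)\le(1+o(1))\frac{p}{4(1-\alpha)}n^{2-\alpha-\beta}e^{-(\alpha+\beta)/C}$, hence $\lambda_1\le(1+o(1))e^{-(\alpha+\beta)/C}$ against $\vol(G)=(1+o(1))\frac{p}{1-\alpha}n^{2-\alpha-\beta}$. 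Second, the genuine caveat sits at the \emph{small}-rank end, not where you point: your integral over $r\in[1,n]$ tacitly assumes every vertex keeps rank $\approx r$, and hence radius $\approx s_r$, throughout its life, but the concentration result justifying this (Lemma~\ref{lemma:random_rank}) applies only when $R\ge\sqrt{n}\log^2 n$. For smaller ranks the paper abandons trajectory control altogether: it bounds the influence volume deterministically by $O(n^{-\beta})$, counts the resulting ``dangerous'' hubs near the boundary via Chernoff, and charges them worst-case degrees, showing their total contribution is negligible in both regimes; the large ranks are then handled by a dyadic stratification where concentration does hold. With that split supplied, your outline matches the paper's proof step for step.
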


\subsection{Dimension of OSNs}

Given an OSN, we describe how we may estimate the corresponding dimension parameter $m$ if we assume the GEO-P model. In particular, if we know the order $n$, power law exponent $b$, average degree $d$, and diameter $D$ of an OSN, then we can calculate $m$ using our theoretical results. Formula (\ref{bee}) gives an estimate for $\alpha$ based on the power law exponent $b$.   If $d^*=\log d/\log n$, then equation (\ref{eq:average1}) implies that, asymptotically, $1-\alpha-\beta = d^*$. If $D^*=\log D/\log n$, then formula (\ref{eq:dee})  about the diameter implies that, asymptotically, $D^*=\frac{\beta}{(1-\alpha)m}$. Thus, an estimate for $m$ is given by:
\begin{equation}
m= \frac {1}{D^*} \left( 1 - \left( \frac{b-1}{b-2} \right) d^*\right) = \frac {\log n}{\log D} \left( 1 - \left( \frac{b-1}{b-2} \right) \frac {\log d}{\log n} \right).  \label{mmm}
\end{equation}

This estimate suggests that the dimension is proportional to  $\log n/\log D$. If $D$ is constant, this means that $m$ grows logarithmically with $n$. Recall that the dimension of an OSN may be roughly defined as the least integer $m$ such that we can accurately embed the OSN in $m$-dimensional Euclidean space. Based on our model we conjecture that the dimension of an OSN is best fit by approximately $\log n.$

The parameters $b,$ $d,$ and $D$ have been determined for samples from OSNs in various studies such as~\cite{ahn, twit0, kwak, mislove2007maa}. The following chart summarizes this data and gives the predicted dimension for each network. We round $m$ up to the nearest integer. Estimates of the total number of users $n$ for Cyworld, Flickr, and Twitter come from Wikipedia~\cite{wiki}, and those from YouTube comes from their website~\cite{yt}.  When the data consisted of directed graphs, we took $b$ to be the power law exponent for the in-degree distribution. As noted in~\cite{ahn}, the power law exponent of $b=5$ for Cyworld holds only for users whose degree is at most approximately $100.$ When taking a sample, we assume that some of the neighbours of each node will be missing. Hence, when computing $d^*,$ we used $n$ equalling the number of users in the sample. As we assume that the diameter of the OSN is constant, we compute $D^*$ with $n$ equalling the total number of users.
\begin{center}
\begin{tabular}{|l||l|l|l|l|}
\hline
\textbf{Parameter} & \textbf{OSN} &  &  &  \\ \hline
& Cyworld & Flickr & Twitter & YouTube \\ \hline\hline
$n$ & $2.4\times 10^{7}$ & $3.2\times 10^{7}$ & $7.5\times 10^{7}$ & $3\times 10^{8}$ \\ \hline
$b$ & $5$ & $2.78$ & $2.4$ & $2.99$ \\ \hline
$d^{\ast }$ & $0.22$ & $0.17$ & $0.17$ & $0.1$ \\ \hline
$D^{\ast }$ & $0.11$ & $0.19$ & $0.1$ & $0.16$ \\ \hline
$m$ & $7$ & $4$ & $5$ & $6$ \\ \hline
\end{tabular}
\end{center}

\section{Proofs of results}\label{proofs}

We will make frequent use of the following standard result about the sum of independent random variables, known as the {\sl Chernoff bound}; for a proof see Theorem~2.8 in \cite{JLR}.

\begin{theorem}\label{eqn:Chernoff}
Let $X$ be a random variable that can be expressed as a sum $X=\sum_{i=1}^n X_i$ of independent random indicator variables where $X_i \in {\rm Be}(p_i)$ with (possibly) different $p_i=\Prob (X_i = 1)= \E X_i$. Then the following holds for $t \ge 0$:
\begin{eqnarray*}
\Prob (X \ge \E X + t) &\le& \exp \left( - \frac {t^2}{2(\E X+t/3)} \right),\\
\Prob (X \le \E X - t) &\le& \exp \left( - \frac {t^2}{2\E X} \right).
\end{eqnarray*}
In particular, if $\eps \le 3/2$, then
\begin{eqnarray*}
\Prob (|X - \E X| \ge \eps \E X) &\le& 2 \exp \left( - \frac {\eps^2 \E X}{3} \right).
\end{eqnarray*}
Moreover, if $\E X \le \log^2 n$, then \emph{wep} $X=O(\log^2 n)$.
\end{theorem}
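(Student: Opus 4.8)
The plan is to use Bernstein's exponential moment method. Write $\mu = \E X = \sum_{i=1}^n p_i$. For the upper tail, fix $\lambda > 0$ and apply Markov's inequality to the nonnegative random variable $e^{\lambda X}$, giving $\Prob(X \ge \mu + t) \le e^{-\lambda(\mu + t)} \E e^{\lambda X}$. By independence of the $X_i$ the moment generating function factors as $\E e^{\lambda X} = \prod_{i=1}^n \E e^{\lambda X_i} = \prod_{i=1}^n (1 + p_i(e^\lambda - 1))$, and the elementary inequality $1 + x \le e^x$ yields $\E e^{\lambda X} \le \exp(\mu(e^\lambda - 1))$. Hence $\Prob(X \ge \mu + t) \le \exp(\mu(e^\lambda - 1) - \lambda(\mu + t))$ for every $\lambda > 0$.

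The next step is to optimize over $\lambda$. Differentiating the exponent shows the minimum is attained at $\lambda = \log(1 + t/\mu)$, and substituting gives $\Prob(X \ge \mu + t) \le \exp(-\mu\, g(t/\mu))$, where $g(x) = (1+x)\log(1+x) - x$. To recover the stated form it remains to prove the scalar inequality $g(x) \ge \frac{3x^2}{2(3+x)}$ for all $x \ge 0$, since after the substitution $x = t/\mu$ this is exactly $\mu\, g(t/\mu) \ge \frac{t^2}{2(\mu + t/3)}$. I expect this elementary but slightly delicate inequality to be the main obstacle: one verifies it by noting both sides vanish to second order at $x = 0$ and then comparing derivatives (equivalently, clearing denominators and checking nonnegativity of the resulting one-variable function). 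For the lower tail the same method with $\lambda < 0$ gives $\Prob(X \le \mu - t) \le \exp(-\mu\, h(t/\mu))$ with $h(x) = (1-x)\log(1-x) + x$ for $0 \le x < 1$ (the bound being trivial when $t > \mu$); here the required inequality $h(x) \ge x^2/2$ is softer, following directly from $h'(x) = -\log(1-x) \ge x$ and $h(0)=0$, and yields $\Prob(X \le \mu - t) \le \exp(-t^2/(2\mu))$.

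The combined statement for $\eps \le 3/2$ then follows by setting $t = \eps\mu$. For the upper tail, $\frac{t^2}{2(\mu + t/3)} = \frac{\eps^2\mu}{2(1 + \eps/3)} \ge \frac{\eps^2\mu}{3}$, using $1 + \eps/3 \le 3/2$; for the lower tail, $\frac{t^2}{2\mu} = \frac{\eps^2\mu}{2} \ge \frac{\eps^2\mu}{3}$. Adding the two one-sided probabilities produces the factor of $2$ in $\Prob(|X - \E X| \ge \eps\E X) \le 2\exp(-\eps^2\E X/3)$.

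Finally, for the ``Moreover'' claim, suppose $\E X \le \log^2 n$ and apply the upper tail bound with $t = \log^2 n$. Then $\E X + t/3 \le \frac43\log^2 n$, so $\Prob(X \ge \E X + \log^2 n) \le \exp(-\frac{\log^4 n}{2 \cdot \frac43 \log^2 n}) = \exp(-\frac38 \log^2 n) = \exp(-\Theta(\log^2 n))$. Thus \emph{wep} $X \le \E X + \log^2 n \le 2\log^2 n = O(\log^2 n)$, as required.
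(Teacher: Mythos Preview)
Your argument is correct and is essentially the standard exponential-moment proof of the Chernoff--Bernstein bounds. Note that the paper does not actually supply its own proof of this theorem: it simply states the result and refers the reader to Theorem~2.8 of Janson, \L{}uczak, and Ruci\'nski~\cite{JLR}, so there is nothing to compare against beyond the classical treatment, which is exactly what you have reproduced.
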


\smallskip

Before we prove the theorems discussed in the previous sections, we first give a lemma that shows that, if the initial rank is large enough, the rank of a vertex maintains a value close to its initial value until its death. The first lemma is proved in~\cite{JJPP2}.

\begin{lemma}[\cite{JJPP2}] \label{lemma:random_rank}
Suppose that vertex $v$ obtained an initial rank $R \ge \sqrt{n} \log^2 n$ in the ranking by random initial rank scheme. Then \emph{wep}
$$
r(v,t) = R (1+O(\log^{-1/2}n))
$$
to the end of its life.
\end{lemma}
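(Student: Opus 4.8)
The plan is to track the rank $r(v,t)$ of the vertex $v$ from its birth to its death, and show that the cumulative effect of the $\delta$ and $\gamma$ adjustments never moves $r(v,t)$ far from its initial value $R$. Recall from the model description that at each time-step, for a surviving vertex $v$ we have $r(v,t)=r(v,t-1)+\delta-\gamma$, where $\delta\in\{0,1\}$ records whether the newly inserted rank $R_t$ falls below $r(v,t-1)$, and $\gamma\in\{0,1\}$ records whether the deleted vertex had rank smaller than $r(v,t-1)$. Thus over a window of $\tau$ steps following $v$'s birth, the total displacement is $r(v,t)-R=\sum(\delta_s-\gamma_s)$, a sum of $\pm1$ (and $0$) increments. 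The goal is to bound this sum by $O(R\log^{-1/2}n)$ \emph{wep}.

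First I would estimate the lifetime $\tau$ of $v$. Since one vertex is chosen uniformly at random to die at each step, the number of steps until $v$ dies is geometric with success probability $1/n$, so \emph{wep} $\tau = O(n\log n)$ (and this window is what we must control; by a union bound over the polynomially-many relevant time-steps the \emph{wep} guarantee survives). Next I would analyze the increments. At a step where $v$ currently has rank $r=r(v,t-1)$, the probability that $\delta=1$ is $\Prob(R_t\le r)=r/n$ (the new rank lands below $v$), and the probability that $\gamma=1$ is likewise $r/n$ (the deleted vertex, chosen uniformly, has smaller rank). These have equal means, so the expected drift $\E[\delta-\gamma]$ is essentially zero, and the displacement is a mean-zero (or nearly mean-zero) fluctuation. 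This is the crucial structural observation: \emph{the rank performs an approximately unbiased random walk}, so it does not systematically drift away from $R$.

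The main work is then a concentration argument showing the fluctuation stays of size $o(R)$, specifically $O(R\log^{-1/2}n)$, throughout the lifetime. I would apply the Chernoff bound (Theorem~\ref{eqn:Chernoff}) to the sum $\sum_s \delta_s$ (and separately to $\sum_s \gamma_s$), each being a sum of independent indicators with success probability $\approx r/n$. Over $\tau=O(n\log n)$ steps with $r\approx R$, the mean of each sum is of order $(R/n)\cdot n\log n = R\log n$, so the typical deviation predicted by Chernoff is of order $\sqrt{R\log n\cdot\log^2 n}=\sqrt{R}\,\log^{3/2}n$. The hypothesis $R\ge\sqrt{n}\log^2 n$ is exactly what makes this deviation small relative to $R$: one checks $\sqrt{R}\,\log^{3/2}n = R\cdot(\log^{3/2}n/\sqrt{R}) \le R\cdot\log^{3/2}n/(n^{1/4}\log n) = R\cdot n^{-1/4}\log^{1/2}n = O(R\log^{-1/2}n)$ with room to spare, giving the claimed relative error.

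The chief obstacle is that the increments are only conditionally independent: the success probability $r/n$ for $\delta_s$ and $\gamma_s$ depends on the current rank $r(v,s-1)$, which is itself the running sum of earlier increments, so the indicators are not literally independent and the walk's step distribution changes as the walk moves. I would handle this by a bootstrapping (self-improving) argument: first condition on the crude a priori bound $r(v,t)\in[R/2,\,2R]$ say (valid as long as the fluctuation has not yet been large, which one can enforce by a stopping-time/union-bound argument over the lifetime), so that all relevant success probabilities lie in a narrow band around $R/n$; apply Chernoff on this event to conclude the fluctuation is in fact only $O(R\log^{-1/2}n)$; and verify consistency, so the stopping time \emph{wep} never triggers. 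Since $R(1+O(\log^{-1/2}n))$ stays well within $[R/2,2R]$, the refined bound closes the loop, and a union bound over the $O(n\log n)$ time-steps of $v$'s life preserves the \emph{wep} conclusion. The remaining details are the routine verifications that $\delta$ and $\gamma$ truly have the stated conditional success probabilities under the random-initial-rank and uniform-deletion rules, and that the two contributions combine without destroying the cancellation.
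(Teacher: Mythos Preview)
The paper does not prove this lemma; it simply cites~\cite{JJPP2}. Your outline is the natural argument and is essentially correct: the rank process is a near-unbiased $\pm 1$ walk, and Chernoff plus a stopping-time bootstrap gives the fluctuation bound $O(\sqrt{R}\,\mathrm{polylog}\,n)$, which is $O(R\log^{-1/2}n)$ once $R\ge\sqrt{n}\log^2 n$. This is also in the spirit of the argument the paper gives for the companion Lemma~\ref{lemma:random_rank_bound}, which dominates the rank by a simple birth chain and uses a stopping time.

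One small slip to fix: the lifetime is geometric with parameter $1/n$, so $\Prob(\tau>cn\log n)\approx n^{-c}$ is only polynomially small. To get a \emph{wep} guarantee you need $\tau=O(n\log^2 n)$, not $O(n\log n)$. This pushes the mean of $\sum\delta_s$ up to order $R\log^2 n$ and the Chernoff deviation at the \emph{wep} scale to $O(\sqrt{R}\,\log^2 n)$ rather than $O(\sqrt{R}\,\log^{3/2}n)$. Your final comparison still goes through comfortably: $\sqrt{R}\,\log^2 n\le R\cdot\log^2 n/(n^{1/4}\log n)=R\,n^{-1/4}\log n=o(R\log^{-1/2}n)$. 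Also worth noting explicitly that within a single step $\delta$ and $\gamma$ are independent (they depend on the new random rank and the random deletion, respectively), and that the tiny systematic drift $\E[\delta-\gamma\mid r]=-(r-1)/(n(n-1))$ sums to only $O(\log^2 n)$ over the lifetime, which is negligible.
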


As we will see, the degree of a vertex depends on its rank and its age. The measure used to quantify the age of a vertex is its  {\sl age rank}. The age rank $a(v,t)$ of vertex $v$ at time $t$ is a number between 1 and $n$ which represents the rank of $v$ if all vertices alive at time $t$ are ranked according to age, oldest first. So $a(v,t)=1$ means that $v$ is the oldest vertex alive at time $t$, while $a(v,t)=n$ implies that $v$ is the youngest.

\begin{lemma}\label{lemma:random_rank_bound}
Suppose that vertex $v$ obtained an initial rank $R \ge \log^3 n$. Moreover, $v$ has age rank $a(v,L) \ge Cn$ for some constant $C \in (0,1)$. Then \emph{wep}
$$
r(v,t) \le R \cdot 3^{2\log (1/C)} = O(R)
$$
during the whole process up to time $L$.
\end{lemma}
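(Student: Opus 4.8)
The plan is to combine two ingredients. The age-rank hypothesis will be used to show that $v$ is \emph{young}, i.e. that it has been alive for only $O(n)$ steps; and over such a bounded lifespan the rank of $v$ evolves as a martingale with increments of absolute value at most $1$, whose running maximum can be controlled by splitting the lifespan into blocks of $n$ steps and applying a Chernoff/Azuma estimate on each block.

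First I would translate the hypothesis $a(v,L)\ge Cn$ into an upper bound on the age of $v$. Let $\tau$ denote the number of steps $v$ has been alive at time $L$. Because exactly one vertex, chosen uniformly at random, is deleted at every step, a vertex of age $j$ has survived $j$ deletion rounds, each with survival probability $1-1/n$; consequently the expected number of vertices alive at time $L$ of age at least $j$ is $(1+o(1))\,n(1-1/n)^{j}=(1+o(1))\,n\,e^{-j/n}$. A concentration argument on the deletion sequence (as used for the random-deletion mechanism in~\cite{JJPP2}) shows that \wep this count equals $(1+o(1))\,n\,e^{-j/n}$ uniformly over the relevant range of $j$. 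Since $a(v,L)$ is exactly the number of vertices of age at least $\tau$, the bound $a(v,L)\ge Cn$ yields $C\le (1+o(1))\,e^{-\tau/n}$, and hence \wep $\tau\le 2n\log(1/C)$ once $n$ is large.

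Next I would analyse the rank process $r(v,\cdot)$ over these at most $2n\log(1/C)$ steps. By the update rule, at each step $r(v,t)-r(v,t-1)=\delta-\gamma$, where $\delta$ is the indicator that the incoming rank $R_t$ lies below $r(v,t-1)$ and $\gamma$ is the indicator that the deleted vertex had smaller rank. Under random initial rank (with $R_t$ uniform on $[n]$) both events have probability $(r(v,t-1)-1)/n$ and are independent, so $\E[\delta-\gamma\mid\mathcal F_{t-1}]=0$ and $|\delta-\gamma|\le 1$: the rank is a martingale with bounded increments. I would then partition the lifespan into $\lceil 2\log(1/C)\rceil$ consecutive blocks, each of length $n$, and show by induction that \wep the rank grows by a factor of at most $3$ across each block. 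Concretely, if the rank equals $r_0$ at the start of a block, introduce the stopping time at which it first reaches $3r_0$; up to that time each step is a non-lazy move with probability at most $6r_0/n$, so \wep at most $O(r_0)$ non-lazy moves occur during the block, and the Azuma--Hoeffding inequality bounds the probability that the martingale accumulates the net increase $2r_0$ needed to reach $3r_0$ by $\exp(-\Omega(r_0))\le \exp(-\Omega(R))=\exp(-\Omega(\log^3 n))$. Since $R\ge \log^3 n$ this is \wep. A union bound over the $O(1)$ blocks (and over the polynomially many vertices) then gives that \wep the rank at most triples per block, so $r(v,t)\le R\cdot 3^{2\log(1/C)}=O(R)$ throughout, as required.

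The main obstacle is the self-referential nature of the per-block estimate: the bound on the number of non-lazy moves uses the very conclusion $r\le 3r_0$ that I am trying to establish, so the stopping-time coupling must be arranged so that the increment probabilities are controlled up to the first violation. A secondary technical point is that the deletions are not independent (exactly one vertex dies per step), so the concentration used for the age count and for the number of non-lazy moves must be justified by negative association or a bounded-differences estimate rather than a naive Chernoff bound for independent indicators; this is routine but should be stated. Finally, one should check that demanding each block triple simultaneously, together with the factor-$2$ slack in the age bound, is exactly what produces the stated constant $3^{2\log(1/C)}$.
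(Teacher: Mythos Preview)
Your high-level structure matches the paper's proof exactly: bound the lifespan of $v$ by $O(n\log(1/C))$ using the age-rank concentration from~\cite{JJPP2}, split into $2\log(1/C)$ blocks, and show via a stopping-time argument that the rank at most triples per block. The constants line up to give $3^{2\log(1/C)}$ in both arguments.

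The one notable difference is the per-block tail estimate. The paper does not use the martingale structure or Azuma at all; instead it passes to the \emph{one-sided} dominating process $X_0=R$, $X_{t+1}=X_t+1$ with probability $X_t/n$ and $X_{t+1}=X_t$ otherwise, which ignores the favourable $\gamma$-decrements entirely. Then, up to the stopping time $T=\min\{t:X_t>3R\text{ or }t>n/2\}$, the number of up-moves in $n/2$ steps is stochastically dominated by $\mathrm{Bin}(n/2,3R/n)$, so a single Chernoff bound gives an increase of at most $(1+o(1))3R/2<2R$ \wep, hence $T\ge n/2$. This sidesteps both of the technical worries you raised: there is no need to argue that $\delta-\gamma$ is a martingale (it is not quite one once you condition on $v$ surviving), and there is no dependence issue from the deletion process, since $X_t$ depends only on the independent insertion ranks. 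Your Azuma route works too, but the paper's one-sided bound is cleaner and avoids the self-referential coupling you flagged as the main obstacle.
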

\begin{proof}
If follows from Theorem~5.5 in~\cite{JJPP2} that \emph{wep} the age rank of a vertex $v$ after
$$
t \le n \log n / 2 - 2 n \log \log n
$$
steps is $(1+o(1)) n \exp(-t/n)$. Since $v$ has age rank at least $Cn$ at time $L$, this implies that $C\leq (1+o(1))\exp(-t_v/n)$, where $t_v$ is the age of $v$ at time $L$ (so $v$ was born at time $L-t_v$). Thus, $t_v\leq (1+o(1))n\log(1/C)$. Recall that the rank of $v$ at the time it is born equals $R$; we wish to find an upper bound on $r(v,L)$ by considering how the rank can change in $t_v$ steps of the process.  Consider the following random variable $X_t$: $X_0 = R$, $X_{t+1} = X_t + 1$ with probability $X_t/n$; $X_{t+1}=X_t$, otherwise.  The variable $X_t$ is an upper bound on the rank of $v$ after $t$ steps. Note that this upper bound only considers changes to the rank due to a vertex of higher rank being inserted, not the change in rank due to vertices of  lower rank being deleted.

We introduce the stopping time:
$$
T = \min \{t \ge 0 : X_t > 3 R\text{ or } t > n/2\}.
$$
For any time $t\leq T$, $X_t$ goes up with probability at most $3 R/ n$. After $n/2$ steps, our random variable increases by at most $(1+o(1)) (3R/n) (n) = (1+o(1)) 3R/2$ \emph{wep}, provided that $n/2 \le T$. Hence, \emph{wep} $T \ge n/2$. Thus, \emph{wep} in the first  $n/2$ steps after its birth, the rank of vertex $v$ can increase by at most a factor of $3$.  Dividing the total life span of $v$ into at most $2\log(1/C)$ blocks of $n/2$ time-steps each, we obtain the result.
\end{proof}

\subsection*{Proof of Theorems~\ref{powerlaw} and \ref{adeg}}

The following theorem shows how the degree of a given vertex depends on its age rank and its initial rank.

\begin{theorem}\label{thm:labels_degree}
Let $\alpha \in (0,1)$, $\beta \in (0,1-\alpha)$, $m \in \N$, $p\in (0,1]$, $i = i(n) \in [n]$. Let $v_i$ be the vertex in GEO-P$(\alpha,\beta,m,p)$ whose age rank at time $L$ equals $a(v_i,L)=i$, and let $R_i$ be the initial rank of $v_i$.

If $R_i \ge \sqrt{n} \log^2 n$, then \emph{wep}
\begin{equation} \label{eq:degrees_bigR}
\deg(v_i,L) = (1+O(\log^{-1/2} n)) p \left( \frac {i}{(1-\alpha)n} + \left(\frac{R_i}{n}\right)^{-\alpha} \frac {n-i}{n} \right) n^{1-\alpha-\beta}.
\end{equation}
Otherwise, that is if $R_i < \sqrt{n} \log^2 n $, \emph{wep}
\begin{equation} \label{eq:degrees_smallR}
\deg(v_i,L) \ge (1+O(\log^{-1/2} n)) p \left( \frac {i}{(1-\alpha)n} + (n^{\alpha/2} \log^{-2 \alpha} n) \frac {n-i}{n} \right) n^{1-\alpha-\beta}.
\end{equation}
\end{theorem}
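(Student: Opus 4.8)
The plan is to split $\deg(v_i,L)$ according to whether a neighbour is \emph{older} than $v_i$ (born before $v_i$) or \emph{younger} (born after), and to show that these two groups account for the first and second terms of the formula respectively. The point is that an edge is created only at the birth of its newer endpoint: an older vertex $u$ is adjacent to $v_i$ precisely when $v_i$ fell in $R(u,t_0-1)$ at the birth time $t_0$ of $v_i$, while a younger vertex $w$ is adjacent to $v_i$ precisely when $w$ fell in $R(v_i,t_w-1)$ at the birth time $t_w$ of $w$. By the definition of age rank, conditioning on $a(v_i,L)=i$ fixes the number of older survivors at $i-1$ and the number of younger survivors at $n-i$; and Theorem~5.5 of~\cite{JJPP2} gives \emph{wep} that $v_i$ was born at time $t_0=L-t_i$ with $t_i=(1+o(1))n\log(n/i)$, so that each vertex of $V_{t_0-1}$ survives to time $L$ with probability $(1+o(1))e^{-t_i/n}=(1+o(1))i/n$.

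For the older-neighbour contribution (which is the source of the first term in both displays and is insensitive to $R_i$), I would use that the ranks of the $n$ vertices of $V_{t_0-1}$ form a bijection onto $[n]$, so $\sum_{u\in V_{t_0-1}}|R(u,t_0-1)| = n^{-\beta}\sum_{j=1}^n j^{-\alpha} = (1+o(1))\tfrac{1}{1-\alpha}n^{1-\alpha-\beta}$. Since deletion is uniform and independent of node positions, the $i-1$ surviving older vertices form, to first order, a uniform random $(i-1)$-subset of $V_{t_0-1}$ that is independent of the connection coin flips. Hence the expected number of surviving older neighbours is $(1+o(1))\tfrac{i-1}{n}\,p\,n^{-\beta}\sum_j j^{-\alpha}=(1+o(1))\tfrac{p}{1-\alpha}\,i\,n^{-\alpha-\beta}$. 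The delicate point is to check that conditioning on $a(v_i,L)=i$---that is, on $v_i$ itself surviving while exactly $i-1$ members of $V_{t_0-1}$ survive---does not bias the survivors' ranks away from uniform; this follows from the exchangeability built into uniform deletion, and the random-subset sum concentrates about its mean by Chernoff.

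For the younger-neighbour contribution I would write it, conditionally on the rank trajectory of $v_i$, as a sum over the $n-i$ younger survivors $w$ of independent indicators with success probabilities $p\,r(v_i,t_w-1)^{-\alpha}n^{-\beta}$, so everything hinges on controlling $r(v_i,\cdot)$ over the whole life of $v_i$. When $R_i\ge\sqrt n\log^2 n$, Lemma~\ref{lemma:random_rank} gives $r(v_i,t)=R_i(1+O(\log^{-1/2}n))$ throughout, so the expected count is $(1+O(\log^{-1/2}n))\,p\,R_i^{-\alpha}n^{-\beta}(n-i)$, which is the second term of~\eqn{eq:degrees_bigR}. When $R_i<\sqrt n\log^2 n$ I need only an upper bound on the rank: viewing $r(v_i,t)$ as a near-martingale that moves up or down by one with probability about $r/n$ at each step, I would introduce the stopping time at which the rank first exceeds $2\sqrt n\log^2 n$, bound the variance accumulated over the at most $n\log n$ steps of $v_i$'s life by $O(\sqrt n\log^3 n)$, and apply a maximal inequality to conclude that \emph{wep} the rank never leaves $[\,0,(1+o(1))\sqrt n\log^2 n\,]$. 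Then $r(v_i,t)^{-\alpha}\ge(1+o(1))(\sqrt n\log^2 n)^{-\alpha}=(1+o(1))n^{-\alpha/2}\log^{-2\alpha}n$, which yields the lower bound on the younger-neighbour count in~\eqn{eq:degrees_smallR}.

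Once a typical survival set and rank trajectory are fixed, both neighbour counts are sums of independent indicators, and their total expectation is at least of order $n^{1-\alpha-\beta}\gg\log^3 n$ for every admissible $i$ and $R_i$. Applying the Chernoff bound of Theorem~\ref{eqn:Chernoff} with relative deviation $\log^{-1/2}n$ gives failure probability at most $\exp(-\Theta(\log^2 n))$, so \emph{wep} the degree equals its expectation up to the stated $(1+O(\log^{-1/2}n))$ factor, the Chernoff error being dominated by the rank-fluctuation error. I expect the main obstacle to be exactly this rank control in the younger-neighbour term: in the first case one must invoke Lemma~\ref{lemma:random_rank} uniformly over all $n-i$ connection times, and in the second case one must push the variance-accumulation/stopping-time estimate through for small age rank $i$ (old, long-lived vertices), where Lemma~\ref{lemma:random_rank_bound} only yields the far weaker bound $O(R_i\,(n/i)^{\Theta(1)})$ and the variance estimate must be run self-consistently against the very bound it is trying to establish.
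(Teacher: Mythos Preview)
Your proposal is correct and follows essentially the same decomposition as the paper: split $\deg(v_i,L)$ into older and younger neighbours, compute the older contribution via $\sum_{r=1}^n pr^{-\alpha}n^{-\beta}$ thinned by the survival fraction $i/n$, control the younger contribution via Lemma~\ref{lemma:random_rank}, and close with Chernoff using the observation that the total expectation is always $\Omega(n^{1-\alpha-\beta})$.

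Two places where the paper is simpler than your outline. First, the detour through Theorem~5.5 of~\cite{JJPP2} to compute the birth time $t_0$ is unnecessary: the conditioning $a(v_i,L)=i$ already fixes the number of surviving older vertices at $i-1$, and uniform deletion makes them a uniform random $(i-1)$-subset of $V_{t_0-1}$, so the expected older-neighbour count is directly $\tfrac{i-1}{n-1}$ times the expected initial degree --- no need to know $t_0$. Second, for $R_i<\sqrt n\log^2 n$ the paper does not set up a separate martingale/stopping-time argument; it simply asserts that \emph{wep} $r(v_i,t)\le\sqrt n\log^2 n\,(1+O(\log^{-1/2}n))$ throughout, the implicit reason being that the moment the rank first reaches the threshold $\sqrt n\log^2 n$, the concentration of Lemma~\ref{lemma:random_rank} (applied from that time, by the Markov property) pins it there. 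Your variance-accumulation argument reaches the same conclusion and is fine, but is more work than needed. Relatedly, your closing worry about Lemma~\ref{lemma:random_rank_bound} is a red herring: that lemma plays no role in this proof, and the rank control for small $R_i$ does not depend on the age rank $i$ at all.
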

\begin{proof}
Let $\deg^+(v_i,L)$ denote the number of neighbours of $v_i$ with age rank smaller than $i$, and define $$\deg^-(v_i,L)=\deg(v_i,L)-\deg^+(v_i,L).$$ Let us focus on these random variables independently.

Since vertices are distributed uniformly at random, the expected initial degree of $v_i$ (at the time $v_i$ was born) is
$$
\sum_{r=1}^n p r^{-\alpha} n^{-\beta} = p n^{-\beta} \int_1^n x^{-\alpha} dx + O(1) = \frac {p}{1-\alpha} n^{1-\alpha-\beta} + O(1).
$$
>From the $n-1$ vertices that were older than $v_i$ at the time it was born, only $i-1$ remain. Since vertices are deleted uniformly at random, the expected number of older neighbours remaining equals
$$
\E \deg^+(v_i,L) = \frac {p}{1-\alpha} \ \frac {i}{n} \ n^{1-\alpha-\beta} + O(1).
$$
Since $\deg^+(v_i,L)$ can be expressed as a sum of independent random variables, it follows from Theorem~\ref{eqn:Chernoff} that \emph{wep} this random variable is well concentrated around its expectation , provided that $\E \deg^+(v_i,L) = \Omega(\log^3 n)$. This condition holds if, for example, $i > n^{\alpha+\beta} \log^3 n$. In this case we have that $\deg^+(v_i,L)=(1+O(\log^{-1/2} n)) \E \deg^+(v_i,L)$.

Next we consider the contribution to the degree of $v_i$ of vertices that are younger than $v_i$. Suppose first that $R_i \ge \sqrt{n} \log^2 n$. It follows from Lemma~\ref{lemma:random_rank} that \emph{wep} $r(v_i,t) = R_i (1+O(\log^{-1/2} n))$ to the end of its life. Therefore,
\begin{eqnarray*}
\E \deg^-(v_i,L) &=& (1+O(\log^{-1/2} n)) p R_i^{-\alpha} n^{-\beta} (n-i) \\
&=& (1+O(\log^{-1/2} n)) p \left( \frac {R_i}{n} \right)^{-\alpha} \frac {n-i}{n} n^{1-\alpha-\beta}.
\end{eqnarray*}
Similarly as before, this is well concentrated around its expectation. More precisely, \emph{wep}
$$
\deg^-(v_i,L)=(1+O(\log^{-1/2} n)) \E \deg^-(v_i,L),
$$
provided that $\E \deg^-(v_i,L) = \Omega(\log^3 n)$. Note that it is sufficient to have $i < n - n^{\alpha+\beta} \log^3 n$, even if $R_i=n$.

Let us mention that if one of $d^+(v_i,L)$, $d^-(v_i,L)$ is $O(\log^3 n)$ in expectation, then the other is expected to be $\Omega(n^{1-\alpha-\beta})$ and \emph{wep} is concentrated around its expectation. This implies that \emph{wep} their sum $\deg(v_i,L)$ is always concentrated for $R_i \ge \sqrt{n} \log^2 n$. Combining the number of older and younger neighbours, we obtain that \emph{wep}~(\ref{eq:degrees_bigR}) holds.

Finally, if $R_i < \sqrt{n} \log^2 n$, then \emph{wep} $r(v_i,t) \le \sqrt{n} \log^2 n (1+O(\log^{-1/2} n))$ to the end of its life. The argument for $\deg^+(v_i)$ is analogous and so is omitted, but we only obtain a lower bound for $\deg^-(v_i)$. Thus, we find that \emph{wep}~(\ref{eq:degrees_smallR}) holds.
\end{proof}

\smallskip

It follows from Theorem~\ref{thm:labels_degree} that \emph{wep} the minimum degree is $$(1+o(1)) p n^{1-\alpha-\beta}.$$ Vertices of minimum degree are old and of low rank: they have age rank $i=o(n)$ and thus lost most of their initial links,  and their initial rank $R_i = n - o(n)$, so they never acquired many new links.

The maximum degree is changing during the process and this behaviour is not possible to predict. However, in order to get an upper bound, suppose the extreme case where the oldest vertex is ranked number one during its entire life. In such an extreme case, the degree would be $(1+o(1)) p n^{1-\beta}$ \emph{wep} which indicates that \emph{wep} the maximum degree is at most $(1+o(1)) p n^{1-\beta}$.

\smallskip

We now turn to the average degree.

\begin{proof}[Proof of Theorem~\ref{adeg}] By Theorem~\ref{thm:labels_degree} the average degree is
\begin{eqnarray}
\frac {2 |E|}{n} &=& \frac {2}{n} \sum_{i=1}^n \deg^+(v_i, L) \nonumber\\
&=& (1+o(1)) \ \frac {2}{n} \ \sum_{i=1}^n \frac {p}{1-\alpha} \ \frac{i-1}{n-1} n^{1-\alpha-\beta} \nonumber\\
&=& (1+o(1)) \frac {p}{1-\alpha} n^{1-\alpha-\beta}.  \label{eq:average} \qedhere
\end{eqnarray}
\end{proof}

The proof of Theorem~\ref{powerlaw} is now a simple consequence of Theorem~\ref{thm:labels_degree}. Let $k$ be such that
$$
n^{1-\alpha-\beta} \log^{1/2} n \le k \le n^{1-\alpha/2-\beta} \log^{-2 \alpha -1} n.
$$
One can show that \emph{wep} each vertex $v_i$ that has the initial rank $R_i \ge \sqrt{n} \log^2 n$ such that
$$
\frac {R_i}{n} \ge \big(1+\log^{-1/3} n \big) \left( p n^{1-\alpha-\beta} \frac {n-i}{n} k^{-1} \right)^{1/\alpha}
$$
has fewer than $k$ neighbours, and each vertex $v_i$ for which
$$
\frac {R_i}{n} \le \big(1-\log^{-1/3} n \big) \left( p n^{1-\alpha-\beta} \frac {n-i}{n} k^{-1} \right)^{1/\alpha}
$$
has more than $k$ neighbours.

Let $i_0$ be the largest value of $i$ such that
$$
\left( p n^{1-\alpha-\beta} \frac {n-i}{n} k^{-1} \right)^{1/\alpha} \ge \frac {2 \log^2 n}{\sqrt{n}}.
$$
(Note that $i_0 = n - O(n/\log n)$, since $k \le n^{1-\alpha/2-\beta} \log^{-2 \alpha -1} n$.) Thus,
\begin{eqnarray*}
\E N_{\ge k} &=& \sum_{i=1}^{i_0} \big(1+O(\log^{-1/3} n) \big) \left( p n^{1-\alpha-\beta} \frac {n-i}{n} k^{-1} \right)^{1/\alpha}
+ O \left( \sum_{i=i_0+1}^n \frac {\log^2 n}{\sqrt{n}} \right) \\
&=& \big(1+O(\log^{-1/3} n) \big) \left( p n^{-\alpha-\beta} k^{-1} \right)^{1/\alpha} \sum_{i=1}^{n} (n-i)^{1/\alpha} \\
&=& \big(1+O(\log^{-1/3} n) \big) \left( p n^{-\alpha-\beta} k^{-1} \right)^{1/\alpha} \frac {n^{1+1/\alpha}}{1+1/\alpha} \\
&=& \big(1+O(\log^{-1/3} n) \big) \frac {\alpha}{\alpha+1} p^{1/\alpha} n^{(1-\beta)/\alpha} k^{-1/\alpha}.
\end{eqnarray*}
and the assertion follows from the Chernoff bound, since $k \le n^{1-\alpha/2-\beta} \log^{-2 \alpha -1} n$ and so
$\E Z_{\ge k} = \Omega(\sqrt{n} \log^{2+1/\alpha} n). $ \qed

\subsection*{Proof of Theorem~\ref{thm:diam}}

We consider the upper and lower bounds in separate arguments.

\subsubsection*{Upper bound}

The idea of the proof is to  show that we can construct a connected subgraph of vertices spread out over the hypercube with a small diameter. This subgraph will act as a {\sl backbone}, and the next step in the proof shows that each vertex is connected to the backbone by a path of length at most 2.

We will fix values $A\in (0,1)$ and $R\in [n]$ to suit our needs later.  To construct the backbone we partition the hypercube into $1/A$ subcubes, each of volume $A$. Consider all vertices with initial rank at most $R$ and age rank between $n/4$ and $n/2$; we will call such vertices {\sl eminent} vertices. We now choose $A$ and $R$ such that {\em wep} ($i$) the influence region of each eminent vertex contains the subcube in which it is located as well as all neighbouring subcubes, and ($ii$) each subcube contains at least $\log^2 n$ eminent vertices.

Property ($i$) will be achieved if the sphere of influence of each eminent vertex has size at least $4^m A$ throughout the process. Note that the sphere of influence of an eminent vertex initial has volume at least $R^{-\alpha} n^{-\beta}$, and by Lemmas~\ref{lemma:random_rank} and~\ref{lemma:random_rank_bound} \emph{wep} it remains at least $3^{-2 \alpha \log 4} R^{-\alpha} n^{-\beta} > (R/25)^{-\alpha} n^{-\beta}$ to the end of the process.  We can thus achieve ($i$) by choosing $R$ and $A$ such that the initial influence region is sufficiently larger than $4^m A$ --- we choose $5^m A$. This leads to our first condition on $R$ and $A$:
\begin{equation} \label{eq:first_eq}
(R/25)^{-\alpha}n^{-\beta} = 5^m A.
\end{equation}

It follows from the Chernoff bound that, to guarantee that \emph{wep} every subcube contains at least $\log^2 n$ eminent vertices, it is sufficient to choose $A$ and $R$ so that the expected number of eminent vertices in a subcube is at least $\frac 52 \log^2 n$.  Since the initial rank is independent of age, the expected number of eminent vertices in a subcube equals $(n/4)(R/n)A=(R/4)A$. Thus the following condition on $A$  and $R$ guarantees ($ii$):
\begin{equation} \label{eq:second_eq}
 A \frac {R}{4} = \frac 52 \log^2 n.
\end{equation}

Combining~(\ref{eq:first_eq}) and~(\ref{eq:second_eq}) we find the following values for $R$ and $A$:
\begin{eqnarray*}
R &=& (10 \cdot 25^{-\alpha} \cdot 5^{m}n^\beta \log^2 n)^{1/(1-\alpha)}\\
\frac {1}{A} &=& \frac {(10\cdot 25^{-\alpha})^{1/(1-\alpha)}}{10} 5^{\frac {m}{1-\alpha}} n^{\frac{\beta}{1-\alpha}} \log^{\frac {2\alpha}{1-\alpha}} n.
\end{eqnarray*}

The set of all eminent vertices form the \emph{backbone}. Since vertices in each subcube induce a random graph with a parameter $p$ (constant, bounded away from zero), \emph{wep} the induced subgraph is connected and of diameter of two. For any two neighbouring subcubes, the probability that there is no edge between them is equal to
$$
(1-p)^{{\log^2 n \choose 2}} = \exp ( -  \Omega(\log^4 n) ),
$$
so \emph{wep} there is at least on edge connecting two neighbouring subcubes. Since the largest metric distance in the hypercube with torus metric equals 1/2, and the subcubes have diameter $2A^{1/m}$, the diameter of the backbone is \emph{wep}
$$
O \left( (1/A)^{1/m} \right) = O( n^{\frac{\beta}{(1-\alpha)m}} \log^{\frac {2\alpha}{(1-\alpha)m}} n )
$$

To finish the proof, we will show that \emph{wep} vertex $v$ that is not in the backbone is within distance two from some vertex in the backbone. Consider any vertex $v$ not part of the backbone. Consider $S_v$, the ball of volume $n^{-\alpha-\beta}$ centered at $v$. The volume of $S_v$ is the minimum volume of a sphere of influence, so for each vertex $w$ in $S_v$, edge $vw$ exists with probability $p$. Note also that every vertex of age rank greater than $n/2$ in any subcube links to the backbone vertex in the same subcube with probability $p$, since the sphere of influence of the backbone vertex includes all of the subcube. There are $(1+o(1))n^{1-\alpha-\beta}/2$ vertices of age rank greater than $n/2$ in $S_v$, and a path of length 2 from $v$ to the backbone using that vertex exists with probability $p^2$. Thus, {\sl wep} such a path exists. \qed

\subsubsection*{Lower bound}

Note that for $m = \Theta(\log n)$ the theorem states that $D \ge 1$ and the lower bound trivially holds. We can assume then that $m = o(\log n)$. Let $R = n^{ \frac {\beta}{1-\alpha} } \log^{-1} n$. Note that at every point of the process, the union of influence regions of vertices with rank at most $R$ has volume at most
$$
\sum_{r=1}^R r^{-\alpha} n^{-\beta} = \Theta(R^{1-\alpha} n^{-\beta}) = o(1).
$$
These vertices can generate \emph{long} edges, we will call such vertices {\sl hubs}; other edges are \emph{short}. The length of every short edge is \emph{wep} at most
$$
(1+o(1)) \left( R^{-\alpha} n^{-\beta} \right)^{1/m} = (1+o(1)) n^{ \frac{-\beta}{(1-\alpha)m}} \log^{\frac {\alpha}{m}} n,
$$
since the rank at least $R$ is well concentrated. (This time the length corresponds to the torus metric, not the graph distance.)

Since the total volume of the influence regions of the hubs is $o(1)$, there must be a vertex $v$ and a constant $c\in(0,1/3)$ so that \emph{wep} the ball around $v$ with radius $c$ does not intersect any hub of an influence region. Moreover, since vertices are uniformly distributed, \emph{wep} there is a vertex $u$ at (metric) distance greater than $c$ from $v$. Any path from $v$ to $u$ must use short edges to bridge the distance from $v$ to the edge of the circle with radius $c$. This implies that \emph{wep} the path has (graph distance) length at least $(1+o(1)) n^{ \frac{\beta}{(1-\alpha)m}} \log^{\frac {-\alpha}{m}} n$, which finishes the proof. \qed

\subsection*{Proof of Theorem~\ref{cluster}}

Consider $G$ sampled from GEO-P$(\alpha,\beta,m,p)$, and define $V_{min} = n^{-\alpha - \beta}.$ Then in the GEO-P$(\alpha,\beta,m,p)$, $V_{min}$ corresponds to the lower bound for a volume of the influence region that is obtained for a vertex with rank $n.$ Thus, if the distance between $u$ and $v$ is at most $r_{min}=V_{min}^{1/m}/2$,  two vertices $u$ and $v$ are adjacent with probability $p$ . (Of course, edges can also be created between vertices which are a larger distance apart, but this requires additional conditions on the initial rank of these vertices.)  Recall that, due to our choice of metric, the ball of radius $r_{min}$ centered at $v$ is a hypercube. We will call this the {\sl minimal hypercube} of $v$. The minimal region of a vertex is always a subset of its region of influence.

Fix $v \in V(G)$. Without loss of generality, we can assume that $v$ is the origin of the hypercube $S$; that is, $v = (0,0,\dots, 0) \in S$. In order to estimate $c(v)$ from below we partition the minimal hypercube of $v$ into  $K^m$  disjoint, identical, small hypercubes of volume $\log^3 n / n$ each. The expected number of neighbours of $v$ in every ball is $p \log^3 n$, so \wep the number of neighbours equals $(1+O(\log^{-1/2}n)) p \log^3 n$.
Note that
\begin{eqnarray*}
K &=& \left\lfloor \left( \frac {V_{min} n }{\log^3 n} \right)^{1/m} \right\rfloor_2 \\
&=& \left\lfloor \left( \frac {n^{1-\alpha-\beta} }{\log^3 n} \right)^{1/m} \right\rfloor_2 \\
&=& \left( \frac {n^{1-\alpha-\beta} }{\log^3 n} \right)^{1/m} (1+O(1/K)).
\end{eqnarray*}
Recall that $\lfloor x\rfloor_2$ is the largest even integer smaller than or equal to $x$.

We index the balls as follows:
\begin{eqnarray*}
b_{i_1, i_2, \dots,i_m} &=& (a_{i_1-1},a_i)\times \dots \times (a_{i_m-1},a_i),
\end{eqnarray*}
where
$a_i = - \frac {V_{min}^{1/m}}{2} + i \left( \frac {\log^3 n}{n} \right)^{1/m}  $.
For all $s \in [m]$, $i_s$ takes values $i_s = 1, 2, \dots, K$.

Note that every vertex in $ b_{i_1, i_2, \dots,i_m}$ falls into the influence region of every vertex in $b_{j_1, j_2, \dots,j_m}$ if the following condition holds:
\begin{description}
\item[$(\ast )$] For all $s \in [m]$, $|i_s-j_s| \le K/2-1$.
\end{description}
Using this observation, we can derive a lower bound on  $e(N(v))$, the number of edges in the neighbourhood of $v$.  By symmetry, we have that \emph{wep}
\begin{eqnarray*}
e(N(v)) &\ge& \frac {1}{2} \sum_{i_1=1}^{K} \dots \sum_{i_m=1}^{K}\sum_{j_1 = 1, (*)}^K \dots \sum_{j_m = 1, (*)}^K e( b_{i_1, i_2, \dots,i_m}, b_{j_1, j_2, \dots,j_m})  \\
&=& \frac {1}{2} \sum_{i_1=1}^{K} \dots \sum_{i_m=1}^{K}\sum_{j_1 = 1, (*)}^K \dots \sum_{j_m = 1, (*)}^K  (1+o(1)) (p \log^3 n)^2 p \\
&\ge& (1+o(1)) \frac {2^m}{2}  \sum_{i_1=1}^{K/2} \dots \sum_{i_m=1}^{K/2}\sum_{j_1 = 1}^{K/2-1+i_1} \dots \sum_{j_m = 1}^{K/2-1+i_m} (p \log^3 n)^2 p ;
\end{eqnarray*}
we use the notation $\sum_{j_s = 1, (*)}^K$ to indicate that the sum is over all $j_s$ between 1 and $n$ such that $(*)$ holds.

By symmetry, we have that the case where $i_s\leq K/2$ is symmetrical to the case where $K- i_s +1 \leq K/2$. So we can count $e(N(v))$ by considering only the hypercubes $b_{i_1,\dots ,i_m}$ where $i_s<K/2$ for $1\leq s\leq m$, and multiplying the result by $2^m$. Using this symmetry argument, we have that \emph{wep}
\begin{eqnarray*}
e(N(v)) &\ge&
 \frac {2^m}{2}  \sum_{i_1=1}^{K/2} \dots \sum_{i_m=1}^{K/2}\sum_{j_1 = 1}^{K/2-1+i_1} \dots \sum_{j_m = 1}^{K/2-1+i_m}
 e( b_{i_1, i_2, \dots,i_m}, b_{j_1, j_2, \dots,j_m})  \\
& = & (1+o(1)) \frac {2^m}{2}  \sum_{i_1=1}^{K/2} \dots \sum_{i_m=1}^{K/2}\sum_{j_1 = 1}^{K/2-1+i_1} \dots \sum_{j_m = 1}^{K/2-1+i_m} (p \log^3 n)^2 p ;
\end{eqnarray*}
we use the notation $\sum_{j_s = 1, (*)}^K$ to indicate that the sum is over all $j_s$ between 1 and $n$ such that $(*)$ holds.

Thus, \emph{wep}
\begin{eqnarray}
e(N(v)) &\ge& (1+o(1)) \frac {2^m}{2} \sum_{i_1=1}^{K/2} \dots \sum_{i_m=1}^{K/2} \left( \frac K2-1+i_1 \right) \dots \left( \frac K2-1+i_m \right) (p \log^3 n)^2 p \nonumber \\
&=& (1+o(1)) \frac {2^m}{2}\left( \left(\frac {K/2 + (K-1)}{2}\right) \frac {K}{2} \right)^m (p \log^3 n)^2 p \nonumber\\
&=& (1+o(1)) \frac {1}{2} \left( \frac 34 \left(1 - \frac {2}{3K} \right) K^2 \right)^m (p \log^3 n)^2 p \nonumber\\
&=& (1+o(1)) \exp \left(- O \left( \frac mK \right) \right) \left( \frac 34 \right)^m   \frac {\Big(p n^{1-\alpha-\beta} \Big)^2}{2}p \label{lb}.
\end{eqnarray}

Fix a vertex $v_i \in V(G)$, $i = an$ for some $a \in [0,1]$. Suppose that the initial rank of $v_i$ is $R_i = bn$ for some $b \in (0,1]$. It follows from Theorem~\ref{thm:labels_degree} that \emph{wep}
$$
\deg (v_i) = (1+o(1)) p \left( \frac {a}{1-\alpha} + b^{-\alpha} (1-a) \right) n^{1-\alpha-\beta}.
$$
Hence, by (\ref{lb}) \emph{wep} $c(v_i)$ may be bounded from below as follows:
\begin{eqnarray*}
c(v_i) &\ge& (1+o(1)) \exp \left(O \left( \frac mK \right) \right) \left( \frac 34 \right)^m   \frac { (pn^{1-\alpha-\beta})^2 / 2 } { {\deg(v_i,L) \choose 2} } p\\
&=&  (1+o(1)) \exp \left(O \left( \frac mK \right) \right) \left( \frac 34 \right)^m  \left( \frac {a}{1-\alpha} + b^{-\alpha} (1-a) \right)^{-2}p.
\end{eqnarray*}
Since the initial rank is distributed uniformly at random, we derive that the expected value of $c(v_i)$ can be bounded as follows:
$$
\E (c(v_i)) \ge (1+o(1)) \exp \left(O \left( \frac mK \right) \right) \left( \frac 34 \right)^m p \int_0^1 \left( \frac {a}{1-\alpha} + b^{-\alpha} (1-a) \right)^{-2} db.
$$
Therefore, we find that \emph{wep}
$$
c(G) \ge (1+o(1)) \exp \left(O \left( \frac mK \right) \right) \left( \frac 34 \right)^m p D(\alpha),
$$
where
$$
D(\alpha) = \int_0^1 \int_0^1 \left( \frac {a}{1-\alpha} + b^{-\alpha} (1-a) \right)^{-2} db \ da.
$$
After changing the order of integration, it is straightforward to see that
\begin{eqnarray*}
D(\alpha) &=&
\int_0^1 \int_0^1 \left( a \left( \frac {1}{1-\alpha} - b^{-\alpha} \right) + b^{-\alpha} \right)^{-2} da \ db \\
&=& - \int_0^1 \left[ \left( a \left( \frac {1}{1-\alpha} - b^{-\alpha}
\right) + b^{-\alpha} \right)^{-1} \left( \frac {1}{ \frac {1}{1-\alpha}
- b^{-\alpha} }\right) \right]_{a=0}^{a=1} \\
&=& - \int_0^1 \left( (1-\alpha) - b^{\alpha} \right) \left( \frac {1}{ \frac {1}{1-\alpha} - b^{-\alpha} } \right) \ db \\ &=& \left[ \frac {1-\alpha}{1+\alpha} \ b^{1-\alpha} \right]_{b=0}^{b=1} =
\frac {1-\alpha}{1+\alpha},
\end{eqnarray*}
which finishes the proof of the theorem. \qed

\smallskip

This proof shows that, as the dimension $m$ increases, this lower bound on the  size of $e(N(v))$, and thus the number of triangles, decreases exponentially. This concurs with our interpretation of the dimension as the minimal number of attributes that characterizes a user in a social network. It makes sense that assortativity decreases as the dimension increases: if a user has a friend $A$ with whom she shares interests in one dimension, and a friend $B$ with whom she connects in another dimensions, then the chances that $A$ and $B$ become friends may not be much larger than dictated by random chance.

\subsection*{Proof of Theorem~\ref{thm:expa}}

In order to prove Theorem~\ref{thm:expa} we show that there are sparse cuts in our model. As in the previous subsection, for sets $X$ and $Y$ we use the notation $e(X,Y)$ for the number of edges with one end in each of $X$ and $Y$. Suppose that the unit hypercube $S=[0,1]^m$ is partitioned into two sets of the same volume,
$$
S_1 = \{ x=(x_1, x_2, \dots, x_m) \in S : x_1 \le 1/2 \},
$$
and $S_2 = S \setminus S_1$. Both $S_1$ and $S_2$ contain $(1+o(1)) n/2$ vertices \emph{wep}. In a good expander (for instance, the binomial random graph $G(n,p)$), \emph{wep} there would be
$$
(1+o(1)) \frac {|E|}{2} = (1+o(1)) \frac{p}{4(1-\alpha)} n^{2-\alpha-\beta}
$$
edges between $S_1$ and $S_2$. Below we show that it is not the case in our model.

\begin{theorem}\label{thm:cuts}
Let $\alpha \in (0,1)$, $\beta \in (0,1-\alpha)$, $m \in \N$, and $p\in (0,1]$. Then \emph{wep} GEO-P$(\alpha,\beta,m,p)$ has the following property:
\begin{enumerate}
  \item if $m = m(n) = o(\log n)$, then $e(S_1,S_2) = o(n^{2-\alpha-\beta})$,
  \item if $m = m(n) = C \log n$ for some $C > 0$, then
$$
e(S_1,S_2) \le (1+o(1)) \frac{p}{4(1-\alpha)} n^{2-\alpha-\beta} \exp \left( - \frac {\alpha+\beta}{C}\right).
$$
\end{enumerate}
\end{theorem}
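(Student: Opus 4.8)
The plan is to compute $\E\, e(S_1,S_2)$ essentially exactly by a geometric averaging argument over vertex positions, and then to obtain concentration by conditioning on the rank trajectories (via the stability lemmas) and applying the Chernoff/bounded-differences machinery already used for Theorem~\ref{thm:labels_degree}.

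First I would record the key geometric fact. Under the torus $L_\infty$ metric the influence region of a vertex of (stabilized) rank $R$ is an axis-parallel cube of side $s(R)=\bigl(R^{-\alpha}n^{-\beta}\bigr)^{1/m}\le 1$ and volume $V(R)=R^{-\alpha}n^{-\beta}$. An edge between an older vertex $u$ and a younger vertex $v$ contributes to $e(S_1,S_2)$ only if $v$ falls in the part of $R(u)$ lying on the opposite side of the cut from $u$. Since the cut depends on the first coordinate alone and positions are uniform, I would show that, averaged over the position of $u$, the expected volume of this ``opposite part'' equals exactly $\tfrac12 s(R)\,V(R)$: the two boundary hyperplanes $\{x_1=0\}$ and $\{x_1=1/2\}$ each contribute $s(R)^2/4$ to the expected length of the crossing interval in the first coordinate, so the expected crossing length is $s(R)^2/2$ and the crossing fraction of the cube is $s(R)/2$. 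This follows from the one-line computation $\int_0^{1/2}\bigl([s/2-x]_+ + [x+s/2-1/2]_+\bigr)\,dx = s^2/4$, doubled by symmetry; the torus makes the two boundary contributions add with no wrap-around correction for any $s\le 1$.

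Next I would assemble the expectation. By Theorem~\ref{thm:labels_degree} and Lemmas~\ref{lemma:random_rank} and~\ref{lemma:random_rank_bound}, \emph{wep} a vertex of age rank $i$ and initial rank $R$ keeps influence volume $(1+o(1))V(R)$ throughout its life and gains $(1+o(1))p(n-i)V(R)$ younger neighbours; multiplying by the crossing fraction $s(R)/2$ and summing over all vertices with $R=xn$, $i=yn$ for $x,y$ independent uniform on $(0,1)$ gives
$$
\E\, e(S_1,S_2) = (1+o(1))\,\frac{p}{2}\,n^{2}\,n^{-(\alpha+\beta)(1+1/m)}\int_0^1 (1-y)\,dy\int_0^1 x^{-\alpha(1+1/m)}\,dx .
$$
The $y$-integral is $1/2$ and the $x$-integral is $\bigl(1-\alpha(1+1/m)\bigr)^{-1}\to(1-\alpha)^{-1}$. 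For $m=C\log n$ one has $n^{-(\alpha+\beta)/m}=\exp(-(\alpha+\beta)/C)$ and $1/m\to0$, producing exactly $(1+o(1))\frac{p}{4(1-\alpha)}n^{2-\alpha-\beta}\exp(-(\alpha+\beta)/C)$; for $m=o(\log n)$ one has $n^{-(\alpha+\beta)/m}=o(1)$, giving $o(n^{2-\alpha-\beta})$. The small-rank vertices (hubs, say $R\le n^{\beta/(1-\alpha)}\log^{-1}n$) must be excised before invoking the stability lemmas; I would bound all edges incident to hubs crudely by (number of hubs)$\times(1+o(1))pn^{1-\beta}$ and check this is $o(n^{2-\alpha-\beta})$, so they affect neither conclusion.

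Finally, for concentration I would condition on the rank trajectories, which \emph{wep} behave as in Lemmas~\ref{lemma:random_rank} and~\ref{lemma:random_rank_bound}; after this conditioning $e(S_1,S_2)$ is a function of the independent uniform positions and the independent edge-coins, and I would apply a bounded-differences (McDiarmid) inequality. Changing one coin alters $e(S_1,S_2)$ by at most $1$, and moving one position alters it by at most the degree of that vertex; \emph{wep} every degree is $O(pn^{1-\beta})$ (Theorem~\ref{thm:labels_degree}), and both $\sum_v\deg(v)^2$ and the number of potential edges are bounded by $n^{c}$ for some $c<4-2\alpha-2\beta$, so an additive deviation of $n^{2-\alpha-\beta}\log^{-1/3}n$ fails with probability $\exp(-n^{\Omega(1)})$, which is \emph{wep}. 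In case~(1) this deviation already dominates $\E\,e(S_1,S_2)$, yielding $o(n^{2-\alpha-\beta})$; in case~(2) it is negligible next to $\E\,e(S_1,S_2)=\Theta(n^{2-\alpha-\beta})$, yielding $(1+o(1))\E\,e(S_1,S_2)$. The main obstacle is precisely this step: the bounded differences are the \emph{random} degrees, which are dominated by the few high-degree hubs, so one must first truncate or condition on the \emph{wep}-typical degree sequence before invoking McDiarmid, and simultaneously confirm that the hubs contribute only $o(n^{2-\alpha-\beta})$ to $\E\,e(S_1,S_2)$ — which holds because the rank-integral converges away from $R=0$ and the total hub degree is at most (number of hubs)$\times(1+o(1))pn^{1-\beta}$. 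Everything else reduces to the geometric crossing-fraction identity and the routine integral above.
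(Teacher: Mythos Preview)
Your geometric crossing-fraction identity is correct and yields the right expectation
\[
\E\, e(S_1,S_2) = (1+o(1))\,\frac{p}{4\bigl(1-\alpha\tfrac{m+1}{m}\bigr)}\,n^{2-(\alpha+\beta)\frac{m+1}{m}},
\]
which indeed gives both conclusions. This is a genuinely different route from the paper, which never computes the expectation: it instead defines \emph{dangerous} vertices (those whose influence region ever meets both $S_1$ and $S_2$), notes that the older endpoint of every cut edge must be dangerous, partitions the ranks dyadically, and applies Chernoff to the \emph{count} of dangerous vertices in each range. The paper's method is cruder but needs only upper bounds, which is all the theorem asserts; your method would give the matching lower bound as a bonus.

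There is, however, a concrete gap in your hub step. With your threshold $R_0=n^{\beta/(1-\alpha)}\log^{-1}n$, the claim ``(number of hubs)$\times(1+o(1))pn^{1-\beta}=o(n^{2-\alpha-\beta})$'' is equivalent to $\beta<(1-\alpha)^2$, which is \emph{not} implied by the hypothesis $\beta<1-\alpha$ (e.g.\ $\alpha=0.5$, $\beta=0.4$). Moreover, this threshold is unrelated to the rank-stability lemma you need, which requires $R\ge\sqrt{n}\log^2 n$; and even with the correct threshold $R_0=\sqrt{n}\log^2 n$, the crude bound $R_0\cdot pn^{1-\beta}$ still fails whenever $\alpha\ge 1/2$. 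The paper handles this by first restricting to \emph{dangerous} low-rank vertices --- of which there are \emph{wep} only $O(n^{1/2-\beta/m}\log^2 n)$, since a vertex is dangerous with probability $O(n^{-\beta/m})$ --- and then, crucially, bounding the total degree of $K$ dangerous hubs by $\sum_{r=1}^{K} pr^{-\alpha}n^{1-\beta}=O(n^{1-\beta}K^{1-\alpha})$ (worst case: they occupy the top $K$ ranks), which is $o(n^{2-\alpha-\beta})$ for all $\alpha,\beta$. You need this rank-packing trick or something equivalent; the uniform max-degree bound is not enough.

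Your McDiarmid step has the difficulty you flag: the position-Lipschitz constants are the random degrees, dominated by a few $\Theta(pn^{1-\beta})$ terms. Conditioning on the rank trajectories makes the \emph{expected} degrees deterministic, but the actual degrees still depend on positions and coins. The clean fix is to condition on ranks, set $D_v=2\,\E[\deg(v)\mid\text{ranks}]$ as a deterministic cap, apply McDiarmid to the capped functional (for which $\sum_v D_v^2=O(n^{3-2\alpha-2\beta})$ when $\alpha<1/2$ and $O(n^{2-2\beta})$ when $\alpha\ge 1/2$, both giving $t^2/\sum D_v^2=n^{\Omega(1)}$ for $t=n^{2-\alpha-\beta}\log^{-1/3}n$), and then use Theorem~\ref{thm:labels_degree} to show the capped and uncapped functionals agree \emph{wep}. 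This works, but is considerably more effort than the paper's Chernoff-on-counts argument, which sidesteps the issue entirely because the number of dangerous vertices in each dyadic range is a sum of independent indicators over positions alone.
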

\begin{proof}
Let us call a vertex $v$ \emph{dangerous} if, at some point of the protean process, the influence region of $v$ has nonempty intersection with both $S_1$ and $S_2$; that is, there exists $t$ such that $R(v,t) \cap S_1 \neq \emptyset$ and $R(v,t) \cap S_2 \neq \emptyset$. It is easy to see that the older vertex of every edge between $S_1$ and $S_2$ must be dangerous. We are going to estimate $e(S_1,S_2)$ by investigating the number of dangerous vertices with the final rank from a given range. Since the number of edges adjacent to a given dangerous vertex in the cut can be estimated by its degree, the conclusion can be obtained.

First, let us consider vertices with small ranks; that is, vertices with $r(v,L) < \sqrt{n}\log^2 n$. Unfortunately, for these vertices we cannot control the behaviour of the corresponding random variables $r(v,t)$ during the process. However, deterministically $|R(v,t)|=O(n^{-\beta})$ for all vertices at any point of the process. Since vertices are distributed in $S$ uniformly at random, the expected number of dangerous vertices with $r(v,L) < \sqrt{n}\log^2 n$ can be estimated by $O(n^{-\beta/m}) \sqrt{n} \log^2 n = O(n^{1/2-\beta/m} \log^2 n)$. Hence, \emph{wep} the number of dangerous vertices from this range is $$O(n^{\max\{1/2-\beta/m,0\}} \log^2 n)$$ by the Chernoff bound. As we already mentioned, it is not possible to estimate the number of edges adjacent to these vertices. However, by considering the extreme case, that is, when these vertices have the smallest possible ranks during the whole time, we obtain that \emph{wep} the number of edges between these vertices and older ones is at most
\begin{eqnarray*}
\sum_{r=1}^{O(n^{\max\{1/2-\beta/m,0\}} \log^2 n)} r^{-\alpha} n^{1-\beta} &=& O(n^{1-\beta+(1-\alpha) \max\{1/2-\beta/m,0\}} \log^{2(1-\alpha)} n) \\
&=& o(n^{2-\alpha-\beta}).
\end{eqnarray*}

Now, for a given $i = 1, 2, \dots, \frac 12 \log_2 n - 2 \log_2 \log n$, consider vertices with
\begin{equation} \label{eq:range}
2^{i-1} \sqrt{n} \log^2 n \le r(v,L) < 2^{i} \sqrt{n} \log^2 n.
\end{equation}

By Lemma~\ref{lemma:random_rank}, \emph{wep} $r(v,t) = (1+O(\log^{-1/2} n)) r(v,L)$ (which implies that $|R(v,t)| = (1+O(\log^{-1/2} n)) r(v,L)^{-\alpha} n^{-\beta}$) during the whole its life. It follows from the Chernoff bound that \emph{wep} the number of dangerous vertices that satisfy~(\ref{eq:range}) is at most $$O((2^i \sqrt{n} \log^2 n)^{1-\alpha/m} n^{-\beta/m})$$ and the number of edges adjacent to these vertices can be estimated by
\begin{align*}
O((2^i \sqrt{n} \log^2 n)^{1-\alpha/m} & n^{-\beta/m}) \cdot O( (2^i \sqrt{n} \log^2 n)^{-\alpha} n^{1-\beta})  \\
& = O((2^i \sqrt{n} \log^2 n)^{1-\frac{m+1}{m} \alpha} n^{1-\frac{m+1}{m}\beta}).
\end{align*}
Finally, \emph{wep} the number of edges in the cut, that are adjacent to vertices with final ranks at least $\sqrt{n} \log^2 n$, is
\begin{align*}
\sum_{i=1}^{\frac 12 \log_2 n - 2 \log_2 \log n} &O((2^i \sqrt{n} \log^2 n)^{1-\frac{m+1}{m} \alpha} n^{1-\frac{m+1}{m}\beta}) \\
&= \left\{
    \begin{array}{ll}
      O(n^{2-\frac{m+1}{m} \alpha-\frac{m+1}{m} \beta}) , & \hbox{if $\alpha < \frac {m}{m+1}$;} \\
      O(n^{1-\frac{m+1}{m} \beta} \log n), & \hbox{if $\alpha = \frac {m}{m+1}$;} \\
      O(n^{\frac 32 - \frac{m+1}{m} \frac {\alpha}{2} -\frac{m+1}{m} \beta} \log^{2 - 2 \frac {m+1}{m} \alpha} n), & \hbox{if $\alpha > \frac {m}{m+1}$,}
    \end{array}
  \right.
\end{align*}
which is $o(n^{2-\alpha-\beta})$, provided that $m = o(\log n)$. Hence, item (1) of the theorem follows.

For item (2) in the case $m = C \log n$ for some $C>0$, we must be more precise and take care of constants hidden in the $O(\cdot)$ notation. It can be shown that \emph{wep}
\begin{eqnarray*}
e(S_1,S_2) &\le& (1+o(1)) \frac{p}{4(1-\alpha)} n^{2-\frac{m+1}{m} \alpha-\frac{m+1}{m} \beta} \\
&=& (1+o(1)) \frac{p}{4(1-\alpha)} n^{2-\alpha-\beta} \exp \left( - (\alpha+\beta) \frac {\log n}{m} \right) \\
&=& (1+o(1)) \frac{p}{4(1-\alpha)} n^{2-\alpha-\beta} \exp \left( - \frac {\alpha+\beta}{C}\right),
\end{eqnarray*}
and the proof is complete.
\end{proof}

\bigskip

To finish the proof of Theorem~\ref{thm:expa}, we use the expander mixing lemma for the normalized Laplacian (see~\cite{sgt} for its proof). For sets of nodes $X$ and $Y$ we use the notation $\vol(X)$ for the volume of the subgraph induced by $X$, $\bar{X}$ for the complement of $X$, and, as introduced before, $e(X,Y)$ for the number of edges with one end in each of $X$ and $Y.$ (Note that $X \cap Y$ does not have to be empty; in general, $e(X,Y)$ is defined to be the number of edges between $X \setminus Y$ to $Y$ plus twice the number of edges that contain only vertices of $X \cap Y$.)
\begin{lemma}\label{mix}
For all sets $X \subseteq G,$
\[
\left| e(X,X) - \frac{(\vol(X))^{2}}{\vol(G)} \right| \leq \lambda \frac{\vol(X)\vol(\bar{X})}{\vol(G)}.
\]
\end{lemma}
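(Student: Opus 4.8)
The plan is to prove this as the classical expander mixing lemma for the normalized Laplacian, via the spectral decomposition of the normalized adjacency operator. Set $\mathcal{A} = D^{-1/2} A D^{-1/2}$, so that $\mathcal{L} = I - \mathcal{A}$ and $\mathcal{A}$ shares the orthonormal eigenbasis $\phi_0, \phi_1, \dots, \phi_{n-1}$ of $\mathcal{L}$, with $\mathcal{A}\phi_i = (1-\lambda_i)\phi_i$. The eigenvalue $\lambda_0 = 0$ corresponds, up to normalization, to $\phi_0 = D^{1/2}\mathbf{1}/\sqrt{\vol(G)}$, since $\mathcal{L}(D^{1/2}\mathbf{1}) = 0$ and $\|D^{1/2}\mathbf{1}\|^2 = \sum_v \deg(v) = \vol(G)$.

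First I would rewrite the edge count as a quadratic form. With the counting convention stated just above the lemma, $e(X,X) = \mathbf{1}_X^T A \mathbf{1}_X$, where $\mathbf{1}_X$ is the indicator vector of $X$ (each internal edge is counted twice, and we assume no self-loops). Writing $A = D^{1/2}\mathcal{A}D^{1/2}$ and setting $a = D^{1/2}\mathbf{1}_X$, this becomes $e(X,X) = a^T \mathcal{A}\, a = \sum_{i=0}^{n-1}(1-\lambda_i)\langle a, \phi_i\rangle^2$.

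The key step is to isolate the $i=0$ term as the main term. Since $\langle a, \phi_0\rangle = \frac{1}{\sqrt{\vol(G)}}\,\mathbf{1}_X^T D \mathbf{1} = \vol(X)/\sqrt{\vol(G)}$ and $1-\lambda_0 = 1$, that term contributes exactly $(\vol(X))^2/\vol(G)$, so $e(X,X) - (\vol(X))^2/\vol(G) = \sum_{i=1}^{n-1}(1-\lambda_i)\langle a, \phi_i\rangle^2$. For every $i \ge 1$ the definition of the spectral gap yields $|1-\lambda_i| \le \lambda$, whence the absolute value of this sum is at most $\lambda\sum_{i=1}^{n-1}\langle a, \phi_i\rangle^2$. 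By Parseval, $\sum_{i=0}^{n-1}\langle a, \phi_i\rangle^2 = \|a\|^2 = \mathbf{1}_X^T D \mathbf{1}_X = \vol(X)$, and subtracting the already-computed $i=0$ contribution leaves $\sum_{i=1}^{n-1}\langle a, \phi_i\rangle^2 = \vol(X) - (\vol(X))^2/\vol(G) = \vol(X)\vol(\bar X)/\vol(G)$. Combining these estimates gives the asserted inequality.

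This is essentially routine linear algebra, so I do not expect a deep obstacle; the points needing care are (i) checking that the stated convention for $e(X,X)$ really equals $\mathbf{1}_X^T A \mathbf{1}_X$, and (ii) correctly identifying $\phi_0$ and verifying $\|D^{1/2}\mathbf{1}\|^2 = \vol(G)$, so that the main term emerges exactly as $(\vol(X))^2/\vol(G)$. For the general two-set form one would additionally invoke Cauchy--Schwarz to split the cross terms, but in the diagonal case $X=Y$ the bound follows directly from the single Parseval identity above.
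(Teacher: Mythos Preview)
Your argument is correct and is precisely the standard spectral proof of the expander mixing lemma for the normalized Laplacian. Note, however, that the paper does not supply its own proof of this lemma: it simply quotes the result and refers the reader to Chung's \emph{Spectral Graph Theory} for the proof, so there is no in-paper argument to compare against. The proof you outline is essentially the one found in that reference.
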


\bigskip

\begin{proof}[Proof of Theorem~\ref{thm:expa}]
It follows from~(\ref{eq:average}) and Chernoff bound that \emph{wep}
\begin{eqnarray*}
\vol (G_L) &=& (1+o(1)) \frac {p}{1-\alpha} n^{2-\alpha-\beta}  \\
\vol (S_1) &=& (1+o(1)) \frac {p}{2(1-\alpha)} n^{2-\alpha-\beta} ~~=~~ \vol (S_2).
\end{eqnarray*}

Suppose first that $m=o(\log n)$. From Theorem~\ref{thm:cuts} we derive that \emph{wep}
\begin{eqnarray*}
e(S_1,S_1) &=& \vol (S_1)-e(S_1,S_2) ~~=~~ (1+o(1)) \vol (S_1) \\
&=& (1+o(1)) \frac {p}{2(1-\alpha)} n^{2-\alpha-\beta},
\end{eqnarray*}
and Lemma~\ref{mix} implies that \emph{wep} $\lambda_n \ge 1+o(1)$. By definition, $\lambda_n \le 1$ so $\lambda_n = 1+o(1)$.

Suppose now that $m=C \log n$ for some constant $C>0$. Using Theorem~\ref{thm:cuts} one more time, we find that \emph{wep}
$$
e(S_1,S_1) = (1+o(1)) \frac {p}{1-\alpha} n^{2-\alpha-\beta} \left( \frac 12 - \frac {\exp \left( - \frac {\alpha+\beta}{C}\right)}{4} \right).
$$
As before, the assertion follows directly from Lemma~\ref{mix}.
\end{proof}

\section{Conclusion and Discussion}\label{conc}

We introduced the geo-protean (GEO-P) geometric model for OSNs, and showed that with high probability, the model generates graphs satisfying each of the properties (i) to (iv) in the introduction. We introduce the dimension of an OSN based on our model, and examine this new parameter using actual OSN data. We observed that the dimension of various OSNs ranges from four to $7.$ It may therefore, be possible to group users via a relatively small number of attributes, although this remains unproven. The \emph{Logarithmic Dimension Hypothesis} (or \emph{LDH}) conjectures that  the dimension of an OSN is best fit by $\log n$, where $n$ is the number of users in the OSN.

The ideas of using geometry and dimension to explore OSNs deserves to be more thoroughly investigated. Given the availability of OSN data, it may be possible to fit the data to the model to determine the dimension of a given OSN. Initial estimates from actual OSN data indicate that the spectral gap found in OSNs correlates with the spectral gap found in the GEO-P model when the dimension is approximately $\log n$, giving some credence to the LDH. Another interesting direction would be to generalize the GEO-P to a wider array of ranking schemes (such as ranking by age or degree), and determine when similar properties (such as power laws and bad spectral expansion) provably hold.

We finish by mentioning that recent work~\cite{cklpr} indicates that social networks lack high compressibility, especially in contrast to the web graph. We propose to study the relationship between the GEO-P model and the incompressibility of OSNs in future work.

\end{document}